\begin{abstract}
%In auction theory, a Vickrey--Clarke--Groves (VCG) auction is a type of sealed-bid %auction of multiple items. Bidders submit bids that report their valuations for the  %their true valuations of the items. (Thanks for this text, Wikipedia.) This paper %shows that the same idea works for electronic commerce applications.
%\end{abstract}
\newtheorem{theorem}{Theorem}
\newtheorem{corollary}{Corollary}
\newtheorem{lemma}{Lemma}
\newtheorem{proposition}{Proposition}
\newtheorem{definition}{Definition}
\newtheorem{remark}{Remark}
\newtheorem{example}{Example}
\begin{document}

\title{Zero Carbon  V2X Tariffs  for Non-Domestic Customers}
%\author{Submission 79}
%I guess this is needed for anonymity
%
\author{Elisheva Shamash\thanks{Technion -- Israel Institute of Technology. Email: elisheva.shamash@gmail.com.} \and Zhong Fan\thanks{University of Exeter. Email: Z.Fan@exeter.ac.uk.}  
%\and Cephas Samende
%\thanks{University of Strathclyde. Email: %cephas.samende@strath.ac.uk.} 
}

\maketitle

\begin{abstract}

\vspace{\baselineskip}

With the   aim of meeting the worlds net-zero objectives, electricity trading through contractual agreements  is becoming increasingly relevant in global and local energy markets. 
We develop  contracts  enabling efficient energy trading using \textit{Vehicle to Everything (V2X)} technology which can be applied to regulate  energy markets and 
reduce costs and carbon emissions by using   electric vehicles with bi-directional batteries to
store energy during off-peak hours for export during peak hours.
We introduce a  contract   based on the VCG mechanism [Clarke, 1971] which enables fleets of electric vehicles to  export electricity to the grid efficiently throughout the day, where each electric vehicle has its  energy consumption schedules and costs and  it's energy exporting contract  schedules.

\end{abstract}

%\begin{document}

\begin{titlepage}
\maketitle
\end{titlepage}

\section*{Nomenclature - General}

\begin{itemize}
\item \textit{EV}:  an electric vehicle with a bi-directional battery that can import (export) energy

    \item $kW$: Kilowatt
    \item  $kWh$: Killowatt-hour 
    \item \textit{platform}: A computerized platform that allocates energy transaction contracts to EVs 
    \item $hhp$: half hour periods
in which energy can be exported or imported

   \item \textit{V2X Technology (Vehicle-to-Everything)}: the transfer of the electricity stored in EVs  to  grids (V2G),  to buildings (V2B), to loads (V2L)  etc.

   \item $\mathbb{T}$: a 24 hour day constructed by $hhp$s

\item $Peak$: A number of consecutive $hhp$s of high electricity demand during $\mathbb{T}$

\item $Valley$, $off-peak$: A number of consecutive $hhp$s of low electricity demand during $\mathbb{T}$

\item $D=(D^{hhp_1},D^{hhp_2}, ..., D^{|\mathbb{T}|} ) $:  vector of expected energy demand throughout the day

%\item \textit{Spot market} : the market where energy transactions are settled  a day ahead, or up to 5 minutes in advance;
\end{itemize}

\vspace{2mm}

 \textbf{The Hour-Scheduling mechanism Nomenclature}

\vspace{2mm}
\begin{itemize}

\item  \textit{Hour-Scheduling mechanism}:  enables the platform to allocate energy exporting contracts % to  EVs or fleets of EVs.

\item \textit{safety margin ($sm$)}:  added to the  energy demand vector $D$ in order to increase energy reliability

\item $\hat{D}^{hhp}$: expected energy demand with a safety margin, i.e.,   $\forall hhp\in \mathbb{T}:\hat{D}^{hhp}=D^{hhp}+sm$

\item \textit{Spot market price $m^{hhp}_i\in M_i$}: where $M_i $ is the vector of Spot market prices  $\forall hhp\in \mathbb{T}$ at market  $i\in\{day-ahead,intra-day,balancing\}$

\item \textit{Battery Usage Deterioration Cost ($c^{BD}$)}:     battery deterioration expected cost for charging and discharging  1 $kW$

\item \textit{bundle ($\omega$)} : Each  EV's imported  set of $kW$ is termed a \textit{bundle} and is denoted as  $\omega$

\item  $\mathbb{J}$:  the set of 
 contracts that can be offered in the Hour-Scheduling mechanism 

\item  $j=(n, b, \tilde{hhp},\omega,f_l)\in \mathbb{J}$: 
a contract offered by  fleet $n$ in the Hour-Scheduling mechanism, 
 where   $b$ is  fleets $n$'s announced bid for contract $j$, $\tilde{hhp}$ is the $hhp$ in which contract $j$ should be implemented, $\omega$ is the bundle which is to be exported in contract $j$ and $f_{\ell}$ is the fine if fleet $n$ does not honor the contract and $\omega$ was not exported

\item $J^{offered}\subseteq \mathbb{J}$ : the set of contracts   offered   in the Hour-Scheduling mechanism

\item $J^{offered}_n\subseteq J^{offered}$ : the set of contracts fleet $n\in N$ offers  in the Hour-Scheduling mechanism 

\item $WB$ : The set of all distinct bundles $\omega$ offered by EVs   via contracts  $J^{offered}$ 

\item $n:\mathbb{J}\rightarrow \mathbb{N}$:  the fleet $n$ that offered contract  $j\in \mathbb{J}$

 \item $f_{\ell}:\mathbb{J}\rightarrow \mathbb{R}_{\geq 0}$: 
  the fine fleet $n(j)$ pays if they do not honor contract $j\in \mathbb{J}$ 

\item $\ell:\mathbb{J}\rightarrow \mathbb{R}_{\geq 0}$:  the amount of $kWh$ offered in contract $j\in \mathbb{J}$

\item $b:\mathbb{J}\rightarrow \mathbb{R}_{\geq 0}$:  the bid offered for contract $j\in \mathbb{J}$

\item $\omega:\mathbb{J}\rightarrow WB$:  the bundle of $kWh$ offered in  contract  $j\in \mathbb{J}$

\item $J^{\omega}$ :  the set of contracts  offering to export bundle $\omega\in WB$, i.e. $\forall \omega'
\in WB: J^{\omega'}=\{ j\in \mathbb{J}|\omega(j)=\omega' \}$

\item  $WB_k$ : the $kth$ bundle in vector of bundles $WB$, such that $\forall k\in  [1,|WB|]$

\item  $\overline{WB}(J^{offered})$: the vector of sets of offered contracts $\omega\in WB$. Also abbreviated as $\overline{WB} $ when clear from context, where $\overline{WB}=\{J^{WB_1},..., J^{WB_{|WB|}}\} $

\item $\overline{WB}_k$: the set of contracts offering to export bundle $WB_k$, i.e. $\overline{WB}_k=J^{WB_k}$

\item $\hat{J}(\overline{WB},\hat{D},M)\subseteq J^{offered}$: contracts the platform  returns via the Hour-Scheduling mechanism

\item $\breve{{WB}}^q=(\omega_1, \omega_2,..., \omega_{q})\subseteq WB$:  the vector of bundles in  ${WB}$  proceeding and including  bundle $\omega_q$, i.e. $(WB_1,...,WB_{q-1}, WB_q)$

\item $\breve{\overline{WB}}^q=(J^{\omega_1},...,J^{\omega_q})$:  the vector of sets of contracts offering bundles $\omega\in WB$ preceding and including the set of contracts  $J^{\omega_q}=\overline{WB}_q$ 

\item $SocSave({J},D,M)$:  returns society's savings (i.e. social welfare) calculated in  algorithm~\ref{algValue}

\item $p:\mathbb{J}\rightarrow [0,1]$: denotes  the private probability of fleet $n(j)$ fulfilling contract $j\in \mathbb{J}$

\item $u^n(\overline{WB},\hat{D},M)$: fleet $n$'s utility   by engaging with the Hour-Scheduling mechanism, given $\overline{WB}$, $\hat{D}$ and $M$

\item $u^{system}(\overline{WB},\hat{D},M)$: the platform's  utility 
 by engaging with the Hour-Scheduling mechanism given $\overline{WB}$, $\hat{D}$ and $M$

 \item $payment^n(\overline{WB},\hat{D}, M)$:  the payment the platform transfers to fleet $n$ due to fleet $n$'s 
  engagement with the Hour-Scheduling mechanism 

\item $c(j)$: the cost of fleet $n(j)$ attempting to implement contract $j$ 

\item $\Ddot{\Ddot{J}}$ and $\Ddot{\Ddot{SocSave}}$: denote tables  of sizes $(|WB|+1) \times |\mathbb{T}| \times 1$, where    table $\Ddot{\Ddot{J}}$  describes  potential sets of contracts $J(\cdot)$ given $WB'\subseteq WB, \ d\leq D, \ and \ M$, and  table $\Ddot{\Ddot{SocSave}}$ describes the respective society's savings

%\item $SocSave(\cdot)$: determines society's   savings  due to applying the Hour-Scheduling mechanism. in the case where participants are truthful and announce bids equal to their cost for each contract $j\in J:$ $b(j)=c(j)$, then $SoCSave$ is equivalent to social welfare. %and returns $SocSave:J\times D\times M\rightarrow \mathbb{R}$ where $M\in \mathbb{R}^{\mathbb{T}}$ describes the vector of market prices,    $D\in \mathbb{R}^{\mathbb{T}}_{\geq 0}$ describes the vector of energy demand, which has  not yet been covered by contracts, and $J\subseteq J^{offered}$ is a set of contracts belonging to $J^{offered}$. 

\end{itemize}

\vspace{2mm}
\textbf{The Frequency Regulating mechanism (FR) Nomenclature}
\vspace{2mm}

\begin{itemize}

\item \textit{Ancillary Service
market}: A market facilitating  reliability  of energy in real time

\item \textit{Frequency Regulating (FR) mechanism}: enables  EVs'    participation in  Ancillary Services  via charging and/or discharging energy to/from the platform

\item  \textit{State of Charge ($SoC$)}:   the number of $kW$ a battery is charged

    \item $\forall hhp\in \mathbb{T}:SoC^m(hhp)$:  the $SoC$ of EV $m$'s  battery at $hhp\in \mathbb{T}$ 
    
    \item $plugged(hhp)$: the set of EVs connected  to the platform at $hhp\in \mathbb{T}$
    
    \item $x^m_{max}(hhp)$ ($x^m_{min}(hhp)$):  the desirable maximal (minimal)  $SoC$ in  EV $m$'s battery at the end of  
    $hhp\in \mathbb{T}$, based on EV $m$'s schedule

\item   $J^{active}$: 
  the set of accepted and active  energy transacting contracts that have not yet defaulted
  
    \item  ${p}_{hhp}(y,J^{active}$):   the   probability that  $y$ $kWh$ will   be supplied given contracts 
$J^{active}$ at $hhp\in \mathbb{T}$

\item $X_{hhp}^{max}(J^{active})$:  denotes the maximal  exported energy at $hhp\in \mathbb{T}$ if all contracts $J^{active}$ are honored

\item $payment^m_{FR}(\cdot)$:  EV $m$'s payment for  connecting to the platform via the Frequency Regulating mechanism

\end{itemize}

\section{Introduction}
\label{sec:intro}

Electricity markets are responsible for a large portion of carbon
emissions, therefore improving  electricity market methods for  energy
 and transactions   is expected to have  a significant impact on carbon emission reduction and thus
facilitate the transition to Net Zero~\footnote{See \url{https://www.nationalgrid.com/stories/energy-explained/what-is-net-zero}}. 
Specifically, during periods of high demand, back-up power plants are often
needed to supply electricity demands. These back-up power plants are usually
fossil-fuel powered generators which produce higher carbon emissions than most
other types of power generation facilities. Therefore, during peak hours, additional electricity is sometimes needed to
be purchased from the wholesale electricity market  raising electricity costs. For example,
 the top 1\% of electricity demand accounts for 8\% of electric energy
costs\footnote{See \url{https://www.hged.com/smart-energy/what-is-peak-electricity-demand.aspx}}. 

In order to reduce carbon emissions,
renewable energy technology is developing and expanding, including considerable increases in offshore wind and  in solar.
To integrate this low carbon technology, the energy system  requires a huge increase in flexibility provided by energy storage, interconnectors, and demand-side response, to  
50GW by 2030\footnote{See \url{https://assets.publishing.service.gov.uk/media/64a54c674dd8b3000f7fa4c9/offsho}}. The expansion in low-carbon technology alongside supply and demand flexibility  will ensure that the future Net Zero energy system has sufficient supply during challenging periods. Studies show that a flexible grid could save the UK £10-17 billion per year by 2050 by reducing the need for dedicated back-up generation and grid reinforcement requirements to meet peak demand.\footnote{See \url{https://www.energy-uk.org.uk/insights/balancing-the-system/}.}

In pursuance of developing this required flexibility, many  such technologies are developing.  
The energy management \textit{V2X (Vehicle-to-Everything)} is such a developing technology,   where V2X includes the transfer of the electricity, stored in electric vehicle (EV) batteries, vehicle to the grid (V2G), vehicle to buildings (V2B) and vehicle to houses (V2H)\footnote{see \url{https://www.dnv.com/energy-transition-outlook/v2x/}}.
V2X technology can add flexibility to energy markets, and thus
reduce costs and carbon emissions by using  electric vehicle batteries to
jointly store energy during off-peak hours and export it during peak hours.
 Specifically, V2X can generate revenue by performing energy arbitrage by
discharging (i.e. exporting energy) during expensive peak hours and charging (i.e.
importing energy) during off-peak hours.

There are 32,992 public charging points across the UK (May 2024), which is an increase of 43\% in the number of public charging devices since 2023 
(ZapMap). 
With 50,000 V2G-enabled EVs, each EV could reduce system operation costs by approximately £12,000 per annum and CO2 emissions by around 60 tonnes per annum (e4Future project)\footnote{\url{https://www.chademo.com/economic-benefits-and-co2-saving-to-the-electricity-system-quantified-in-a-new-v2g-study-in-the-uk}.}. 
Therefore, V2X technology is extremely suitable for  carbon emission reduction.

The energy market consists of three main
markets\footnote{\url{https://sympower.net/what-are-energy-markets/} 
} where participants trade  24 hours a day. 
\begin{itemize}
\item the Forward market, where energy transactions are settled weeks, months or years in
advance;
\item the Spot market, where energy transactions are settled  a day ahead, or up to 5
minutes in advance;
\item the Balancing Market 
(Ancillary Service)\footnote{\url{https://www.arnepy.com/balancing-market-ancillary-service}}  for facilitating frequency regulation and easing the balancing market in real time. 
\end{itemize}
Energy transactions are based on contracts between an energy grid and energy importers  (i.e., energy suppliers and consumers). By applying smart contracts, 
 V2X technology can reduce carbon emissions and costs stemming from the above
three markets. We shall focus on the day-ahead Spot market.

For example, consider a micro-grid from which non-domestic customers
including  fleets of EVs, import energy for their use. Fleets often
have an excess quantity of $kWh$ in their EV batteries. By means of a \textit{computerized platform (platform)} 
fleets with bi-directional batteries can export excess energy back to the micro-grid. The platform allocates  contracts determining transactions between the fleets and the micro-grid, stating the quantities of $kWh$ that will be exported from the fleets' EVs to the
micro-grid,  the corresponding payment each fleet will receive
and the time of the transactions. The platform determines these transactions in accordance with the corresponding Spot markets via 
 an algorithmic procedure.

\subsection{Literature Review}

Many proposed mechanisms for 
energy trading between EVs  and energy aggregating platforms
via V2X rely on
machine learning or statistical analysis by aiming to assess the demand and
supply of electricity throughout the day and offer prices for importing and
exporting electricity accordingly (see~\cite{inci2022integrating, tepe2022optimal, liu2013opportunities, signer2024modeling, zheng2020day}). These
techniques require extensive data collection, which might not always be available.
Furthermore, companies and the public, who are strategic agents, might offer untruthful  reports in order to raise their profits, 
or might not  share their  private information  in order to enhance their final profits,
 possibly  resulting in extreme in-efficiency (see~\cite{nisan2007computationally, alon2023incomplete, osogami2023learning}).

Further works include \cite{hou2021simultaneous} who present a multi-round auction for EV charging and introduce a dynamic charging scheduling algorithm, and 
\cite{luo2021blockchain, gao2021price}  designing a double auction mechanism for energy scheduling management, where discharging EVs trade energy with the grid or EVs, and a price adjustment strategy for  is proposed.
Other proposed mechanisms consider blockchain energy markets~\cite{sagawa2021bidding, liu2019electric}) and multi-stage algorithms~\cite{zeng2015incentivized, kanakadhurga2022peer}.

According to \cite{Holmstrom79} and \cite{alon2023incomplete}, the only mechanisms that
can induce players to announce their true valuations, and thus enable  efficient outcomes
are VCG-based mechanisms (i.e., \cite{Clarke71, Vickrey61, Groves73}),
which by nature, induce participates to be truthful by having fleets' payments equal to their \textit{opportunity cost}  \cite{nisan2007computationally}.
Indeed, there exist many works considering VCG based mechanisms for trading energy including \cite{umoren2023vcg, ma2021incentive, yassine2019double, zhong2020multi, zhang2018online, meir2017contract}.
For example, 
\cite{umoren2023vcg, zhong2018topology} 
 consider electricity trading via VCG-based auctions, where the trade takes into account the distance between the EV and the  platform. 
Although many papers consider applying VCG-based mechanisms to V2G, these papers do not take into account  interrelated effects on participants profits from the different time periods  for trading energy (by trading energy during a particular $hhp$ an agent may have to forgo profits from a different $hhp$), nor do they consider non-domestic customers such as fleets of EVs (where each EV has it's own schedule and costs), rather than individual EVs.

\subsection{Main contributions}

The current understanding is that there exists no system in the market that
optimizes V2X mechanisms for electricity trading by non-domestic customers.

In this research, we focus mainly on the Spot market and on the Frequency regulating
market.  
We believe that the Forward market 
may be developed further in a similar fashion to our suggested contract  corresponding to the Spot market, but leave this open for future research. 
Particularly, we consider fleets of
Electric Vehicles exporting energy via a computerized system to  a 
Micro-Grid
via energy exporting contracts. We refer
to the computerized system as the \textit{platform}. For simplicity, we 
consider this platform as the energy recipient, as well as the system operator for allocating contracts and transferring (or receiving) payments.

This paper introduces a novel application of a VCG-based mechanism~\cite{Clarke71} for electricity trading,
for the  day-ahead spot market using smart contracts in a V2X setting that is 
efficient, cost minimizing and simple to implement. We term this application the 
 \textit{Hour-Scheduling mechanism}. The Hour-Scheduling mechanism  allocates electricity exporting contracts to participating EVs alongside  the  Spot markets. 

This mechanism allows a set of $N$ fleets, each consisting of a number of EVs, to collectively store energy during off-peak hours for export during peak hours and thus reduce carbon emissions,  energy costs and  curtailment\footnote{See \url{https://www.next-kraftwerke.com/knowledge/curtailment-electricity}.}.
Specifically, the mechanism enables the $N$ fleets to engage in contracts  stating  quantities of $kWh$ they wish to export at  pre-specified times to the platform,  thus enabling the system to divert importing $kWh$ from the grid during peak hours to off-peak hours.

Although many previous works consider energy trading via VCG mechanisms, they do not take into account possible manipulations by agents applied in order to export their energy at more profitable hours, as well as the added complexities of considering fleets of electric vehicles rather than independent EVs. 

Our approach takes into account the effect of different \textit{half hour periods ($hhp$)}  for trading energy and their interrelated effects on fleets with multiple EVs participating. This approach can
simultaneously induce truthfulness, raise energy trading efficiency and 
can be computed in polynomial time (see theorem~\ref{th12032024})\footnote{In essence, this setting is a specific case of   combinatorial auctions, and may be extended further to a wide range of combinatorial problems~\cite{myers2023learning}.}.
Furthermore, lower bounds for the probability of covering the expected demand  can be deduced from the announced fleet bids for offered
contracts (see proposition~\ref{propBoundForProbabilty}). 

Determining the optimal set of energy transactions is an NP-hard problem by a reduction from the Knapsack problem~\cite{karp1972complexity}. 
Following~\cite{meir2017contract} we solve this problem using dynamic programming, which we extend  to a multi-dimensional case, thus allowing for multiple  $hhp$'s in a 24 hour period, and for fleets offering multiple contracts for exporting the same energy for each $hhp$ via multiple EVs.

Moreover, we introduce a \textit{Frequency Regulating}     
mechanism which enables frequency regulation via EV's $kWh$ charging and discharging,  thus supporting the Ancillary Services market. 
At every given moment, the difference between demand and supply must remain within a certain limit
\footnote{\url{https://www.neso.energy/industry-information/balancing-services/network-services-procurement/voltage-network-services-procurement}}, otherwise black outs may occur. The  Ancillary Services market's purpose is to maintain this difference within its' bounds. Thus the Frequency Regulating mechanism enables fleets to earn additional profit from energy arbitrage by connecting to the platform and importing or exporting energy, in accordance with their batteries' capacities and their  energy requirements, and thus facilitate carbon emission reduction and costs.

The Hour-Scheduling mechanism and the Frequency Regulating mechanism can be  applied  for both domestic and non-domestic use. Specifically,
these mechanisms may offer EV or EV fleets a significant additional source of income, enabling their participation in the energy market, while reducing energy costs. \cite{SothernwoodV2X} describes a possible  transition of electrifying busses and supporting their participation in the Hour-Scheduling and the Frequency regulating mechanisms. 

In this paper, the Hour-Scheduling and Frequency Regulating mechanisms are described in a setting with a micro-grid with which non-domestic customers, consisting of fleets of EVs, import and export energy. 
Our model aims to minimize energy costs for the platform which we assume also  minimizes carbon emissions\footnote{See \url{https://www.hged.com/default.aspx}}. 

In section~\ref{section2} we present our general setting.
Section~\ref{sectionHourSchedulingMec} introduces the  setting  of  the Hour-Scheduling mechanisms for the day-ahead spot market,  and section~
\ref{secHourSchedulingMechanism} introduces the Hour-Scheduling mechanism's contract.
Section~\ref{secAnalysis} presents an analysis  of the Hour-Scheduling mechanism. 
In section~\ref{secFreq} we present the Frequency Regulating mechanism, and section~\ref{secConclusion} concludes.

\section{The General Setting}\label{section2}

We develop two mechanisms for energy trading between fleets of EVs and a local grid termed  
the {Hour-Scheduling mechanism} and the 
{Frequency Regulating mechanism}.

The 24 hour day consists of 48 \textit{half hour periods (hpp)}. We denote $\mathbb{T}$  as the set of all $hhp$'s  during a 24-hour day, such that   $\mathbb{T}=\{$00:00-00:30,  00:30-01:00,  01:00-01:30,    ... , 23:00-23:30,  23:30-00:00$\}$.

The set $\mathbb{T}$  consists of  two subsets:   
$$Peaks=\{peak_1,...,peak_{|Peaks|}  \} \ \   and \  \ Valleys=\{valley_1,...,valley_{|Valleys|}  \},\footnote{In this paper, we refer to the terms \textit{valley} and \textit{off-peak} interchangeably.}$$ where 
between every  two consecutive peaks: $peak_i, peak_{i+1}\in Peaks$ there  is a $valley\in Valleys$ and vise versa. 
Every peak has relatively high electricity demand, while every valley has relatively low electricity demand. 
Every $peak\in \mathbb{T}$ and every $valley\in \mathbb{T}$ consists of multiple $hhp$s.
Furthermore 
$\forall hhp_i, hhp_j\in \{Valleys\cup Peaks\}:hhp_j\neq hhp_i$, we have
$  hhp_i\cap hhp_j= \emptyset$,
and  $\bigcup_{hhp\in  \{Valleys\cup Peaks\}}hhp=\mathbb{T}.$

\begin{example}\label{exampleWeelemd}
The following example demonstrates electricity prices for the day-ahead Spot market of the weekend of the 4th August 2023. 
\begin{figure}[ht]
\centering
   \includegraphics[scale=0.6,trim={0 13 0 0}]{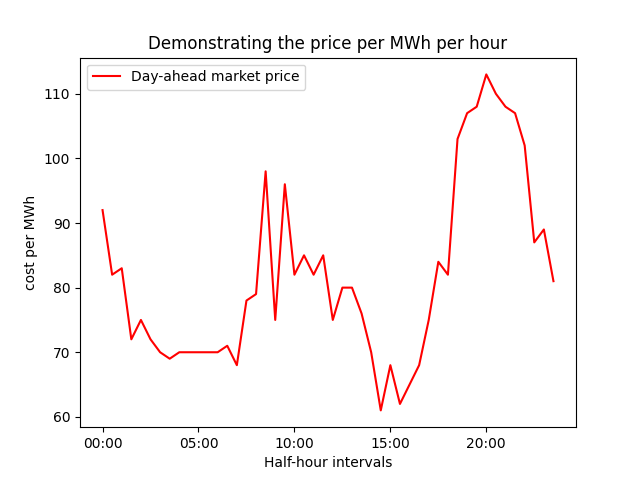}
    \caption{Day ahead Prices at the weekend of 4th August 2023 
    }   \label{fig:exampleWeekend}
\end{figure}
\end{example}
Figure~\ref{fig:exampleWeekend} depicts the day-ahead Spot market (data from epexsport.com).

%\begin{figure}[h]
%   \centering
%\captionsetup{justification=centering,margin=1.5cm}
%    %\includegraphics[scale=0.8,trim={0 13 0 0}]
%    \includegraphics[scale=1.5,trim={0 5 0 0}]
%    {timeLineHPP.png}
%    \caption{A division of $\mathbb{T}$ into peaks and valleys }
%    \Description[<short description>]{<long description>}
%    \label{figPeaksVallies}
%\end{figure}

Consider the  division of $\mathbb{T}$ into peaks and valleys  as follows: 
$$Peaks=\{ 00:00-00:30, 8:00-12:00, 17:00-24:00\}$$ 
which describe high electricity demand periods, and $$Valleys=\{ 01:00-07:30, 12:30-16:30\}$$ which describe off-peak and low electricity demand periods.

Let us denote $\forall hhp\in \mathbb{T}:D^{hhp}$ as the expected   demand\footnote{See \url{https://www.iaea.org/topics/energy-projections}.} for electricity at  $hhp$  
where
$\forall hhp\in \mathbb{T}: D^{hhp}\in \mathbb{R}_{\geq 0}$. Denote $D=(D^{hhp_1},D^{hhp_2}, ..., D^{|\mathbb{T}|} ) $ as the vector of expected demand, 
 and add a \textit{safety margin} $(sm\geq 0)$ to the vector of expected demand
 ${D}$, where $\forall hhp\in \mathbb{T}:\hat{D}^{hhp}=D^{hhp}+sm$,\footnote{This term  includes a safety margin, such that the true expected demand for a given $hhp$ is lower than $\hat{D}^{hhp}$. We leave the analysis of determining the optimal safety margin $ms$ open for future research (see proposition~\ref{propBoundForProbabilty}).} such that
\begin{equation}\label{eqDemand}
\forall hhp\in \mathbb{T}: \hat{D}^{hhp} = D^{hhp}+ \ sm \ \ \ and \ \ \hat{D}=(\hat{D}^{hhp_1}, \hat{D}^{hhp_2}, ..., \hat{D}^{hhp_{|\mathbb{T}|}}).
\end{equation}

Denote the Spot market prices:  $m_{day-ahead}^{hhp}, m_{intra-day}^{hhp}, m_{balancing}^{hhp}$: such that $m_i^{hhp} $ describes the  Spot market price at $ hhp\in \mathbb{T} $ for the $i^{th}$ Spot market, where  $i\in\{day-ahead,intra-day,balancing\}$, such that
\begin{equation}\label{eqMarketPrics}
\forall hhp\in \mathbb{T}: {m}^{hhp}_{i} \in \mathbb{R} \ \ \ where \ \ M_{i}=(m^{hhp_1}_{(\cdot)}, m^{hhp_2}_{i}, ..., m^{hhp_{|\mathbb{T}|}}_{i}).
\end{equation}

We will refer to Spot market prices as  $\forall hhp\in \mathbb{T}:m^{hhp}$  and  $M$ when clear from the context regarding the specific Spot market $i\in\{day-ahead,intra-day,balancing\}$.

We consider a set of fleets, where each fleet has a number of Electric Vehicles (EV)'s, and  a certain quantity of  $kW$  which the fleet can utilize or export.
Each fleet has private information concerning its scheduling costs, energy costs, probability of default for  its  contracts and
\textit{Battery Usage Deterioration Cost ($c^{BD}$)}, where $c^{BD}$ describes the expected cost of battery deterioration  for charging and discharging  1 $kW$\footnote{
$c^{BD}$ is dependent on many attributes such as age of battery, speed of discharge, etc. (see~\cite{fernandez2013capacity})}.

\section{The Hour-Scheduling Mechanism Setting}\label{sectionHourSchedulingMec}

The Hour-Scheduling mechanism  can be implemented   by fleets wishing  to export energy from their bi-directional batteries. 
As stated above, this mechanism allows a set of fleets to collectively store energy during off-peak hours for export during peak hours,  thus reducing carbon emissions and energy costs.
The Hour-Scheduling mechanism promotes  fleets' collective cooperation  of energy storage  by inducing fleets  to trade energy in accordance with  respective energy demand throughout the day. 
By enabling the $N$ fleets to engage in contracts  stating  quantities of $kWh$ they wish to export at  specific $hhps$ to the platform,  we enable the system to divert importing $kWh$ from the grid at peak hours to off-peak hours.

Following~\cite{Clarke71}, we present the 
 Hour-Scheduling mechanism implemented for the day-ahead Spot market, which  can also be applied to multiple intra-day markets separately, or in succession.

We assume the following:

Consider $N$ fleets. Each fleet $n\in N$ owns a number of EV.

Fleets import $kW$ from the grid via their EVs' bi-directional batteries.
Fleets' imported $kW$ consist of \textit{bundles} of $kW$ we denote as $\omega\in WB$, where each bundle $\omega$ consists of a quantity of $kW$,  and $WB$  is the set of  bundles of $kW$ (i.e. the set of $\omega$ EVs have to export), and each  bundle $\omega\in WB$ may be exported only as a whole. When an EV exports their bundle $\omega$ of $kW$ to the grid during an $hhp\in \mathbb{T}$ they are in essence exporting their bundle $\omega$ of $kWh$.

We assume that fleets import $kWh$  during off-peak hours, and offer  to export available bundles $\omega\in WB$ during peak hours.

In the Hour-Scheduling mechanism each fleet  $n \in N$, offers contracts to the system of the form   $j=(n, b, \tilde{hhp},\omega,f_l)\in \mathbb{J}$, where  $n$ is the fleet  offering  contract $j$, $b$ is the bid fleet $n$ announces for contract $j$ which describes the minimal payment  fleet $n$ will accept for each exported $kWh$,  $\tilde{hhp}$  states the $\tilde{hhp}\in \mathbb{T}$ in which the energy transfer is to take place (for example 15:00-15:30), $\omega\in WB$ is the specific bundle of $kW$ offered for export in contract $j$, and $f_{\ell}$ is the penalty  paid by fleet $n$ if contract $j$ is not honored.  %$f_{\ell}$   is calculated thus:
%
%\begin{equation}\label{eqFine}
%f_{\ell}(m^{\tilde{hhp}}_{default-market},j,x^{exported})=\bigg\{m^{\tilde{hhp}}_{default-market}\cdot (\ell(j) - x^{exported}) 
%\end{equation}
%
$\ell:\mathbb{J}\rightarrow \mathbb{R}_{\geq 0}$ describes the amount of $kWh$ offered in contract $j\in \mathbb{J}$, and  $\mathbb{J}$ describes the set of all possible contracts that can be offered.   
%  $ x^{exported}$ describes  the amount of $kWh$ actually transferred when applying contract $j$, and  $m_{default-market}^{\tilde{hhp}}$ is
 %the market price at the time of default. 
 %
 %Specifically, if fleet $n$ announces that it will default from a contract after the day-ahead market takes place but before the next intra-day market takes place for which the  Spot market price is $m^{\tilde{hhp}}_{intra-day}$, then $m_{default-market}^{\tilde{hhp}}=m_{intra-day}^{\tilde{hhp}} $, while if fleet $n$ announces it will default after all intra-day  markets take place (less than five minutes before the balancing market), then $m_{default-market}^{\tilde{hhp}} =m_{balancing}^{\tilde{hhp}}$.

%
Each  fleet $ n\in N$ may offer multiple contracts $J^{offered}_n\subseteq J^{offered}\subseteq \mathbb{J}$, where ${J}^{offered}$ is the   set of  contracts that all fleets $n\in N$  offer,  and $J^{offered}_n$ is the set of contracts offered by fleet $n\in N$.
That is $J^{offered}=\{j\in J^{offered}_n|n\in N \}\subseteq \mathbb{J}$.

For any $j\in J^{offered}$, let $n:\mathbb{J}\rightarrow N$ indicate the fleet that offered contract $j\in \mathbb{J}$.
Let $\omega:\mathbb{J}\rightarrow WB$ indicate the specific $kWh$ bundle offered  in contract 
 $j\in \mathbb{J}$, where the bundle $\omega$ is the excess energy in the battery of an EV belonging to fleet $n$. Let $\ell:WB\rightarrow \mathbb{R}_{\geq 0}$ indicate the quantity of $kWh$ offered in bundle $\omega\in WB$.
 Let $b:\mathbb{J}\rightarrow \mathbb{R}_{\geq 0}$ indicate the  bid in pence per $kWh$ 
 that fleet $n(j)$ offers per  $kWh$ in contract $j\in \mathbb{J}$, and let $hhp:\mathbb{J}\rightarrow \mathbb{T}$ indicate the $hhp\in \mathbb{T}$ at which contract $j\in \mathbb{J}$ should be carried out.  Each EV in fleet $n
\in N$ can offer to export their bundle  $\omega$ only once, and for simplicity, we assume that they  can only export this energy as a single bundle $\omega$.

Consider the list of distinct offered bundles of  $kW$  for export by fleets' $WB=\{\omega'| \exists j\in J^{offered}:\omega(j)=\omega'\}$ where 
$\forall \tilde{\omega}\in WB:|\sum_{\omega\in WB}(\omega'=\tilde{\omega})|\leq 1$. WB is arbitrarily ordered.

Let us denote $\forall \omega'\in WB: J^{\omega'}=\{  j\in J^{offered}: \omega(j)=\omega'\}$, i.e. the set $J^{\omega'}$ describes all contracts that offer to export the specific $\ell(\omega')$ $kW$ in bundle $\omega'\in WB$.
Note that  each $\omega\in WB$ is distinct and it's corresponding $\ell(\omega)$ $kW$ can not be exported more than once, although  a specific $\omega\in WB$ may be offered   multiple times for different $hhp\in \mathbb{T}$ in different contracts during the same $peak$;  a fleet may wish to export bundle $\omega$ during a specific $peak\in Peaks$ but may be flexible regarding the particular $hhp\in peak$\footnote{We  assume  $\forall\omega\in WB:\ell(\omega)$  were imported during the preceding valley period}.
Therefore $\forall \omega\in WB:$ there may exist multiple contracts in $J^{\omega}=\{j,..., j'\}\subseteq J^{offered}$ such that $\omega(j)=...=\omega(j')=\omega$, where 
$\forall j\neq j'\in J^{\omega}$ we have $ hhp(j) \neq  hhp(j')$ and $\{hhp(j),   hhp(j')\}\in peak$ indicate that fleet $n(j)$ is offering to export the \textit{same} bundle $\omega$ at different $hhp$s$\in peak$,
although, only a single contract offering $\omega\in J^{\omega}$ may be accepted.

Note that $J^{offered}=\bigcup_{\omega\in WB}J^{\omega}$ and $\forall \omega'\neq \omega''\in WB$ we have $J^{\omega'}\cap J^{\omega''}=\emptyset$.

Let us denote  
the $k^{th}$ bundle in the vector $WB$ as $\omega_k\in 
 WB$ and/or as $WB_k$, where $ k\in [1, |WB|]$.

Let us denote the vector of sets of offered contracts 
as $\overline{WB}(J^{offered})$ 
such that $\forall k\in  [1,|WB|]:  \overline{WB}_k(J^{offered})=\{ j\in J^{offered}|\omega(j)=WB_k \}=J^{\omega_k}$.
Specifically, $\overline{WB}(J^{offered})=(J^{WB_1}, ..., J^{WB_{|WB|}}  )$. 

 We abbreviate $\overline{WB}(J^{offered})$ to $\overline{WB}$ when clear from the context. %, as well as $\overline{WB}_k(J^{offered})=\overline{WB}_k$.

 The platform perceives a vector of sets of contracts $\overline{WB}$  derived from offered contracts $J^{offered}$   as well as the electricity demand $\forall hhp\in \mathbb{T}:\hat{D}^{hhp}$ and a Spot market prices $\forall hhp\in \mathbb{T}:m^{hhp}\in M$. Subsequently, the platform  accepts a subset of these contracts $\hat{J}(\overline{WB},\hat{D},M)\subseteq J^{offered}$ to be implemented by their offering fleets, aiming to maximize \textit{society's savings}. 
Let $SocSave(J,\hat{D}, M)$  describe society's savings given contracts $J\subseteq\mathbb{J}$, demand $\hat{D}\in \mathbb{R}_{\geq 0}^{|\mathbb{T}|}$ and market prices $M\in \mathbb{R}^{|\mathbb{T}|}$ (see appendix~\ref{appedixCostRed}).
Given $\overline{WB}$, $M$ and $\hat{D}$, let $J^*(\overline{WB},\hat{D}, M)\subseteq J^{offered}$ describe an optimal set of contracts that maximizes social welfare, i.e. minimizes society's savings $SocSave(\cdot)$ (see algorithm~\ref{algValue}).

Finding an optimal set of contracts $J^*(\overline{WB},\hat{D},M)\subseteq \mathbb{J}$ is an NP-hard problem, as it  can be reduced from  the knapsack problem, as shown in proposition~\ref{prop100624}.

\begin{proposition}\label{prop100624}
For any set of fleets' offered contracts $J^{offered}$, for any market prices $M\in \mathbb{R}^{|\mathbb{T|}}$ and for any electricity demand $\forall D\in \mathbb{R}_{\geq 0}^{|\mathbb{T|}}$, finding an optimal set of accepted  contracts $J^{*}\subseteq J^{offered}$ 
that maximizes social welfare, i.e. 
$$\forall \tilde{J}\subseteq J^{offered}:SocSave(J^{*}, D,M)\geq SocSave(\tilde{J},D,M)$$
 is NP-hard.
\end{proposition}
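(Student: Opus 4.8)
The plan is to give a polynomial-time many-one reduction from the (optimization version of the) \textsc{Knapsack} problem, which is NP-hard~\cite{karp1972complexity}, exactly as the statement anticipates. Recall that a \textsc{Knapsack} instance consists of items $i\in\{1,\dots,k\}$ with positive integer weights $w_i$ and values $v_i$, together with a capacity $C$; the goal is to choose $S\subseteq\{1,\dots,k\}$ maximizing $\sum_{i\in S}v_i$ subject to $\sum_{i\in S}w_i\le C$. I would show that any algorithm computing an optimal contract set $J^*(\overline{WB},\hat D,M)$ can be used to solve an arbitrary \textsc{Knapsack} instance, so the former problem is at least as hard.

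Given a \textsc{Knapsack} instance, I would build an instance of the Hour-Scheduling mechanism as follows. Let $\mathbb{T}$ contain a single peak half-hour period $\tilde h$ whose expected demand is the capacity, $\hat D^{\tilde h}=C$, with demand $0$ at every other $hhp$ (a trivial preceding valley can be inserted to respect the alternating $Peaks$/$Valleys$ structure without affecting the argument). For each item $i$ introduce one fleet $n_i$ holding one distinct bundle $\omega_i\in WB$ with $\ell(\omega_i)=w_i$, and let that fleet offer the single contract $j_i=(n_i,b_i,\tilde h,\omega_i,f_i)\in J^{offered}$, all at the same period $\tilde h$, so that $|J^{offered}|=|WB|=k$ and each bundle is offered exactly once. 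Finally I would fix the market price $m^{\tilde h}$ to be large and set the bids $b_i$ (absorbing $c^{BD}$ and any valley import costs into a common baseline) so that the net saving produced by accepting $j_i$ within demand equals exactly $v_i$; concretely, choosing $b_i$ so that the per-$kWh$ margin at $\tilde h$ equals $v_i/w_i$ suffices, and taking $m^{\tilde h}$ large guarantees $b_i\in\mathbb{R}_{\ge 0}$. This construction is computable in time polynomial in the \textsc{Knapsack} input size.

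For correctness I would appeal to the definition of $SocSave(\cdot)$ (algorithm~\ref{algValue} and appendix~\ref{appedixCostRed}). Two properties drive the reduction. First, savings are additive across accepted contracts: a feasible accepted set $\{j_i : i\in S\}$ with $\sum_{i\in S}w_i\le \hat D^{\tilde h}$ yields $SocSave=\sum_{i\in S}v_i$ up to the fixed baseline. Second, exporting beyond the demand $\hat D^{\tilde h}=C$ at $\tilde h$ displaces no further grid import and hence generates no additional saving (and at worst incurs the penalty terms $f_\ell$), so no optimal set gains value by violating the capacity. Consequently the effective constraint $\sum_{i\in S}w_i\le C$ coincides with the \textsc{Knapsack} constraint, and maximizing $SocSave$ over subsets of $J^{offered}$ returns (the fleets corresponding to) an optimal \textsc{Knapsack} solution. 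Thus an optimizer for $J^*$ solves \textsc{Knapsack}, establishing NP-hardness.

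The main obstacle is the claim about $SocSave$ in the preceding paragraph: one must verify from its definition that the value function is genuinely additive over accepted contracts up to the demand cap and that the marginal value past the cap is non-positive, since these are precisely the two features that make the model behave like a knapsack. A secondary check is that all synthesized parameters remain valid --- nonnegative bundle sizes and bids, a legitimate $Peaks$/$Valleys$ partition of $\mathbb{T}$, and integer data --- none of which pose real difficulty once the structure of $SocSave$ is pinned down.
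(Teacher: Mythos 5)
Your reduction is correct, and it shares the skeleton of the paper's own proof: a reduction from Knapsack with a single half-hour period, one contract per item, demand equal to the sack capacity, and a large market price. The genuine difference is the bid calibration, and it is precisely the point where your version is sound while the paper's, as literally written, is not. The paper sets $b(j)=x_i/v_i$ (bid equal to worth density), so by Algorithm~\ref{algValue} the saving contributed by an accepted contract is $(m^{\overline{hhp}}-x_i/v_i)\cdot v_i=m^{\overline{hhp}}v_i-x_i$, which is \emph{decreasing} in the item's worth; the $SocSave$-optimal set therefore need not be the worth-optimal item set. For instance, with items $(v_1,x_1)=(10,1)$, $(v_2,x_2)=(1,10)$ and capacity $Z=10$, the paper's instance has the contract for item $1$ as its unique optimum (saving $999$ versus $90$, and $989$ for both together), whereas the Knapsack optimum is item $2$; the paper's closing ``if and only if'' only matches feasible solutions of equal value, which does not transfer optimality. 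Your calibration, per-kWh margin $m^{\tilde h}-b_i=v_i/w_i$, makes each accepted contract's saving equal exactly $v_i$, so on capacity-respecting sets the two objectives coincide identically, and computing $J^*$ solves the Knapsack instance. Your two flagged ``obstacles'' are also non-issues once Algorithm~\ref{algValue} is inspected: $SocSave$ is additive over accepted contracts minus the single excess term $m^{hhp}\cdot \max\{0,\sum_{j}\ell(j)-D^{hhp}\}$, so each kWh exported past $C$ changes the objective by $-b_i\le 0$ (the $m^{\tilde h}$ gain cancels against the excess penalty --- note it is this term, not the default fine $f_\ell$, that does the work), and choosing $m^{\tilde h}>\sum_i v_i$ forces any capacity-violating set strictly below the value of the empty set because the integer weights make any excess at least $1$. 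In short, your proposal is not merely a rewrite of the paper's argument; it repairs the optimality correspondence that the paper's own choice of bids breaks.
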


\begin{proof}

Proof is by reduction from the Knapsack problem~\cite{karp1972complexity}, where we consider a Knapsack instance $\{(v_i,x_i)\}_{i\in \tilde{N}}$ (volume, worth). 
  
We consider a single  $\overline{hhp}\in \mathbb{T}$, a single fleet $\overline{n}$,  and a fixed fine for default $\overline{f_{\ell}}=0$. 
We consider a knapsack instance  $\{(v_i,x_i)\}_{i\in \tilde{N}}$, where each $i\in \tilde{N}$ is equivalent to a bundle $\omega\in WB$  thus:   $\forall i\in \tilde{N}:(v_i,x_i)$, $\exists \omega\in WB:$ such that there exists a contract $\exists j\in J^{offered}$, where, $b(j) = \frac{x_i}{v_i}$, $m^{\overline{hhp}}=max_{i\in \tilde{N}}[\frac{x_i}{v_i}]\cdot 10$ and $\ell(\omega)=v_i$. 
The demanded quantity  $D^{\overline{hhp}}$ shall be equivalent to the size of sack $ Z$.

Therefore, $\exists J^*\subseteq J^{offered}:\sum_{j\in J^{*}}\ell(\omega(j))\leq D^{\overline{hhp}}$ and $\sum_{j\in J^{*}}b(j)\cdot \ell(j)=Q$ if and only if there exists a set of items of total worth $Q$ that fits in sack $Z$.
\end{proof}

Note that finding an optimal set of contracts $J^*$  is also a variation  of the  bin packing problem where the   items for packing are the bundles $\omega\in WB$ and the set of bins is constant and equal to $\mathbb{T}$ (see \cite{vazirani1997approximation})\footnote{The volume of each bin is equivalent to it's respective  demand $\forall hhp\in \mathbb{T}$: $D^{hhp}$ and each item in the bin packing problem is equivalent to a bundle $\omega\in WB$, where we wish to fill the bins (the $hhp's)$ with bundles $\omega\in WB$ efficiently.}. The Hour-Scheduling mechanism in section~\ref{secHourSchedulingMechanism} is based on a bin packing setting.

\subsection{The Hour-Scheduling Mechanism Contract}
\label{secHourSchedulingMechanism}

We wish to find the optimal set of accepted contracts $J^{*}\subseteq J^{offered}$  which maximize society's savings, given $\hat{D}$ and $M$, such that
$$\forall \tilde{J}\subseteq J^{offered}:SocSave(J^{*},\hat{D},M)\geq SocSave(\tilde{J},\hat{D},M)$$
where $SocSave(J,\hat{D},M)$ indicate society's  savings from accepted contracts $J$ 
 (see appendix~\ref{appedixCostRed}).

We shall introduce in theorem~\ref{th12032024} a mechanism that returns a set of contracts $\hat{J}(\overline{WB},D,M)$ that is efficient.
Consider the following definition,

\begin{definition}
We say a subset of contracts $J\subseteq J^{offered}$  belongs to the set of feasible contracts $ J^{F}\subseteq J^{offered}$, if for each bundle $\omega\in WB$ there exists at most a single contract $\exists j\in J$ offering to export  $\omega$, i.e. 
$\forall \omega'\in WB:|\{\omega(j)|_{j\in J}= \omega' \}|\leq 1$. We say a subset of contracts $J^*\in J^F$ is  \textit{Efficient and Feasible (EF)} and belongs to the set of contracts $J^{EF}$ if given $ D\in \mathbb{R}_{\geq 0}^{|\mathbb{T}|}$ and given  $  M\in \mathbb{R}^{|\mathbb{T}|}$, then: 
$$  
\forall \tilde{J}\subseteq J^{F}:SocSave(\tilde{J},D,M)\leq SocSave(J^*,D,M).$$
\end{definition}

In order to find an efficient set of accepted contracts $J\subseteq J^{offered}$, we follow \cite{meir2017contract,furini2017effective}, 
and consider dynamic programming over  contracts $J^{offered}$,  and reach an efficient solution in polynomial time. We preform the dynamic programming procedure over the set of  bundles  $\omega\in WB$.

Recall $J^{WB_k}$ denotes the set of contracts  offering bundle $WB_k$, i.e., $J^{WB_k}=\{j\in J^{offered}|\omega(j)=WB_k\}=\overline{WB}_k$.

Let $\forall q\in [1,|WB|]:\breve{{WB}}^q=(\omega_1, \omega_2,..., \omega_{q})\subseteq WB$ denote the vector of bundles in  ${WB}$  proceeding and including  bundle $\omega_q$, and let $\breve{\overline{WB}}^q=(J^{\omega_1},...,J^{\omega_q})
\subseteq \overline{WB}$ denote the vector of sets of contracts offering bundles $\omega\in WB$ preceding and including the set of contracts  $J^{\omega_q}=\overline{WB}_q$. Note that $\Breve{\overline{WB}}^{|WB|}=\overline{WB}$, and $\breve{WB}^{|WB|}=WB$.

Let $\hat{J}(\breve{\overline{WB}}^k,d,M)$   denote a feasible subset of   contracts belonging to the set of contracts $J= \{j\in \bigcup_{q\in [0,k]}J^{WB_q}\}$, given 
 \textit{energy demand $d\in \mathbb{R}_{\geq 0}^{|\mathbb{T}|}:0\leq d \leq D$}\footnote{$0\leq d \leq D$ indicates that for each element $i\in [0,\mathbb{T}]$, $d_i\in [0,D_i]$} and market prices $M$ which is returned by theorem~\ref{th12032024}.

Let $\breve{\overline{WB}}^{q-1}$ describe all sets of contracts $J^{\omega}\in \overline{WB}$ preceding the $q^{th}$ set of contracts in $\overline{WB}$, such that  
$\breve{\overline{WB}}^{q-1}=\breve{\overline{WB}}^q\backslash \overline{WB}_q$.

\begin{theorem}\label{th12032024}
For any market prices $M$,  demand $d'\in [0,D]$, and any set of bundles ${WB}'\subseteq {WB}$,  the accepted contracts $J^{*}(\overline{WB}',d',M)\subseteq J^{offered}$
 (and  the fleets' payments) can
be computed in time polynomial in $|WB|$, $|\mathbb{T}|$ and 
$\Pi_{hhp\in \mathbb{T}}|D^{hhp}|$, where  $\forall \omega\in WB:\ell(\omega)$  consist of bounded integers.
\end{theorem}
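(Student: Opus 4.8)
The plan is to prove the statement constructively, by exhibiting a pseudo-polynomial, multi-dimensional dynamic program over the bundles of $WB$, together with a backtracking step that recovers $J^*$ and an $O(N)$ batch of re-solves that yields the VCG (Clarke) payments. Since Proposition~\ref{prop100624} shows the underlying optimization is NP-hard by reduction from Knapsack, we cannot expect running time polynomial in the bit-length of the demand; the stated dependence on $\prod_{hhp\in\mathbb{T}}|D^{hhp}|$ is exactly the pseudo-polynomial bound one obtains by generalizing the classical capacity-indexed Knapsack DP from a \emph{scalar} capacity to a demand \emph{vector} with one coordinate per $hhp$. The integrality hypothesis ($\ell(\omega)$ and the $D^{hhp}$ are bounded integers) is what makes the residual demand at each $hhp$ range over at most $D^{hhp}+1$ integer levels, so that the full state space is finite and of size $\prod_{hhp}(D^{hhp}+1)$ per bundle index.

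First I would introduce the value function $W(q,d):=SocSave\big(\hat{J}(\breve{\overline{WB}}^{q},d,M),d,M\big)$, the optimal society's savings attainable using a feasible selection of contracts drawn only from the first $q$ bundles $\omega_1,\dots,\omega_q$ against residual demand $d\le D$, and set the base case $W(0,d)$ to the savings with no exported bundle (the baseline of serving all of $d$ from the Spot market at prices $M$). Processing $\omega_q$, the recurrence is
\begin{equation*}
W(q,d)=\max\Big\{\,W(q-1,d),\ \max_{\substack{j\in J^{\omega_q}:\ d_{hhp(j)}\ge \ell(\omega_q)}} \big[\, s(j)+W\big(q-1,\ d\ominus_{hhp(j)}\ell(\omega_q)\big)\,\big]\Big\},
\end{equation*}
where $s(j)$ is the marginal contribution of contract $j$ to $SocSave$ (grid cost avoided at $hhp(j)$, net of the bid $b(j)$, the battery-deterioration cost and the default/fine terms, all readable in $O(1)$ from algorithm~\ref{algValue}), and $d\ominus_{hhp(j)}\ell(\omega_q)$ denotes $d$ with its $hhp(j)$ coordinate reduced by $\ell(\omega_q)$. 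The two branches encode the only two feasible decisions for $\omega_q$: export it through exactly one of its contracts $j\in J^{\omega_q}$, or not at all. Because each bundle is touched once and yields at most one accepted contract, the feasibility condition $|\{\omega(j)\mid_{j\in J}=\omega'\}|\le 1$ of the definition above holds by construction; the guard $d_{hhp(j)}\ge\ell(\omega_q)$ enforces that no $hhp$ is over-supplied beyond its demand, matching the bin-packing view.

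Next I would establish optimality by induction on $q$: assuming $W(q-1,\cdot)$ is correct for every residual demand, an optimal feasible selection over $\omega_1,\dots,\omega_q$ either omits $\omega_q$ (its restriction being optimal over $\omega_1,\dots,\omega_{q-1}$ at the same $d$) or exports $\omega_q$ via a unique $j\in J^{\omega_q}$ (its restriction being optimal over $\omega_1,\dots,\omega_{q-1}$ at $d\ominus_{hhp(j)}\ell(\omega_q)$); since the recurrence maximizes over precisely these cases, $W(q,d)$ is optimal. Storing, with each entry, the maximizing choice then lets me recover the optimum by backtracking from $\big(|WB'|,d'\big)$ in $O(|WB|)$ steps, yielding $J^*(\overline{WB}',d',M)=\hat{J}(\breve{\overline{WB}}^{|WB'|},d',M)$; for a general subset $WB'\subseteq WB$ the DP simply processes the bundles of $WB'$ in order.

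For the complexity, the state space has size $(|WB|+1)\cdot\prod_{hhp\in\mathbb{T}}(D^{hhp}+1)$, and each entry is computed by maximizing over $|J^{\omega_q}|\le|\mathbb{T}|$ transitions in $O(1)$ each, giving total time polynomial in $|WB|$, $|\mathbb{T}|$ and $\prod_{hhp\in\mathbb{T}}|D^{hhp}|$. Finally, the Clarke payment to fleet $n$ is its externality, obtained from the optimal $SocSave$ of the sub-instance with $J^{offered}_n$ removed versus the full optimum; this is just one additional DP solve per fleet, an $O(N)$ factor with $N\le|J^{offered}|$, staying within the same polynomial bound (and this is precisely why the theorem is stated for arbitrary $WB'\subseteq WB$ and $d'\le D$). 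I expect the main obstacle to be the multi-dimensional state itself: verifying that a bundle's freedom to be placed in any $hhp$ of its peak is faithfully captured by the per-coordinate updates while the state count stays at $\prod_{hhp}|D^{hhp}|$ rather than blowing up, and confirming that $s(j)$ is genuinely additive across accepted contracts so that the decomposition of $SocSave$ underlying the recurrence is valid.
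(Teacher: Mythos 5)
Your proposal is correct and follows essentially the same route as the paper: a pseudo-polynomial dynamic program indexed by the bundle position in $WB$ and the residual demand vector $d\le D$ (one coordinate per $hhp$), with the recurrence choosing between skipping bundle $\omega_q$ or accepting exactly one contract $j\in J^{\omega_q}$ and decrementing $d$ at $hhp(j)$, optimality proved by induction over $q$, and the same state-count $(|WB|+1)\cdot\prod_{hhp\in\mathbb{T}}(D^{hhp}+1)$ giving the claimed bound. The only differences are presentational and in your favor: you recover $J^*$ by backtracking and make explicit that the VCG payments cost one extra DP solve per fleet (removing $J^{offered}_n$), points the paper's proof leaves implicit, and your guard $d_{hhp(j)}\ge\ell(\omega_q)$ merely makes explicit the restriction to non-over-supplying selections that the paper's table indexing $0\le d\le D$ already imposes tacitly.
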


\begin{proof}
We provide a dynamic program that computes in poly time  a set of optimal accepted contracts ${J}(\overline{WB},D,M)$ that maximizes the platform's cost reduction (i.e. society's savings - see algorithm~\ref{algValue}) 
$SocSave(J^{*}(\overline{WB},D,M),D,M)$ given demand ${D}$, market prices $M$ and offered contracts $J^{offered}$.

Fix an arbitrary order over $\forall \omega\in WB$. Initialize two tables denoted
$\Ddot{\Ddot{J}}$ and $\Ddot{\Ddot{SocSave}}$, of sizes $(|WB|+1) \times d\times 1$ where $(|WB|+1)$  describes bundles $\omega\in WB$, $d\in (d_{hhp_1}, d_{hhp_2},..., d_{hhp_{|\mathbb{T}|}})$ is a vector of length $|\mathbb{T}|$ which depicts the remaining uncovered energy demand $\forall hhp: 0\leq d_{hhp}\leq D^{hhp}$, and the  vector of prices $M$ is constant and doesn't vary in these tables.  Table $\Ddot{\Ddot{J}}$  describes  potential sets of  contracts $J(\cdot)\subseteq J^{offered}$, and  table $\Ddot{\Ddot{SocSave}}$ describes  society's respective savings $SocSave(\cdot) $. 
Specifically,

\begin{enumerate}
\item The cell $\Ddot{\Ddot{J}}(0, d,M)$ is initialized to $\emptyset$ for $0 \leq  d \leq  (D_{hhp_1}, D_{hhp_2},..., D_{|\mathbb{T}|})$.
\item The cell $\Ddot{\Ddot{SocSave}}(r, d,M)$ is initialized to zero for $d = 0$ and  any 
$r\in [1,..., |WB| ]$.
\item The cell $(r,d,M)$ in  table $\Ddot{\Ddot{J}}$, 
will contain the optimal subset of offered contracts $\hat{J}(\breve{\overline{WB}}^r,d,M)\subseteq \bigcup_{q\in[0,r]}J^{WB_q}$ that maximizes 
$SocSave(\hat{J}(\breve{\overline{WB}}^r, d,M))$, i.e.,   $Max_{J\subseteq\bigcup_{q\in[0,r]}J^{WB_q}}SocSave(J, d,M)$.
\end{enumerate}

For every $\forall r\in [1,...,|WB|  ]$ and every $\forall d: 0\leq d\leq D$:
Given $J(\breve{\overline{WB}}^{r-1}, d,M)$, we  compute $J(\breve{\overline{WB}}^r, d,M)$ by considering the best option to meet $d$
without bundle $WB_r$, i.e., $J(\breve{\overline{WB}}^{r-1}, d,M)$, and by computing the maximal value of adding a contract  $\tilde{j}\in \{J^{\breve{\overline{WB}}_r}\cup \emptyset\}$ to $\hat{J}(\overline{WB}^{r-1}, (d^{hhp(j)}-\ell(\tilde{j}),d^{-hhp(\tilde{j})}),M)$. 

More formally, let
$$j^*=argmax_{j\in J^{\breve{\overline{WB}}^r}}\{SocSave(J(\breve{\overline{WB}}^{r-1}\cup j, (d^{hhp(j)}-\ell(j),d^{-hhp(j)})),M)) \}
$$
 and 
$$SocSave^*=max_{j\in J^{\breve{\overline{WB}}^r}}\{SocSave(J(\breve{\overline{WB}}^{r-1}\cup j, (d^{hhp(j)}-\ell(j),d^{-hhp(j)}),M))\}
$$
\normalsize
If $SocSave^*
\leq  SocSave(\hat{J}(\overline{WB}^{r-1},d,M))$  then set $\Ddot{\Ddot{J}}(\breve{\overline{WB}}^r, d,M) = \Ddot{\Ddot{J}}(\breve{\overline{WB}}^{r-1} , d,M) $ and
\small
\newline
    $\Ddot{\Ddot{SocSave}}(J(\breve{\overline{WB}}^r,d,M) )=\Ddot{\Ddot{SocSave}}(\hat{J}(\overline{WB}^{r-1}, d,M))$. 
Otherwise, set 
$J(\breve{\overline{WB}}^{r}, d,M) =
J(\breve{\overline{WB}}^{r-1},(d^{hhp(j^*)}-\ell(j^*),d^{-hhp(j^*)}),M) \cup { j^*
}$ 
\normalsize
and 
\small
$SocSave(J(\breve{\overline{WB}}^r,d,M))=SocSave(J(\breve{\overline{WB}}^{r-1},(d^{hhp(j^*)}-\ell(j^*),d^{-hhp(j^*)}),M) \cup { j^*
}, D, M)$.
\normalsize

We can  solve the optimization problem to find $\hat{J}(\overline{WB},d,M)$ by using binary search on the value of  $SocSave(\hat{J}(\breve{\overline{WB}}^r,d',M))$ for $ \forall d' \leq  d$:$J(\breve{\overline{WB}}^r, d',M)$, and select the solution with the maximal $SocSave (J(\breve{\overline{WB}^r},d',M))$ among them. 

Due to induction, $\hat{J}(\overline{WB},D,M)$ is an optimal set of contracts, i.e.
$$\forall \tilde{J}\subseteq J^F:Soc(\hat{J},D,M)\geq Soc(\tilde{J},D,M).$$
Therefore, we term the set of contracts $\hat{J}(\overline{WB},D,M)$ as $J^*(\overline{WB},D,M)$ or if clear from the context, as $J^*$.
\end{proof}

After contracts $\hat{J}(\overline{WB},D,M)$ have been chosen, each fleet will receive a payment for their accepted contracts  based on the VCG-rule~\cite{Clarke71} which is calculated as follows: 
\begin{equation}\label{eqPaymentRule} 
\forall n\in N: payment^n(J^{offered},D,M)
= SocSave_{-n}( \hat{J}(\overline{WB},D,M))- SocSave(\hat{J}(\overline{WB}_{-n},D,M))
\end{equation}
where $SocSave_{-n}(\hat{J}(\overline{WB},D,M))$ is equal to the expected savings of the platform  due to contracts $\hat{J}(\overline{WB},D,M)$ excluding the cost of fleet $n\in N$ (see proposition~\ref{propTruthfulIRBB}(\ref{proptruth})), that is
$$SocSave_{-n}(\hat{J}(\overline{WB},D,M),D,M)=
SocSave(\hat{J}^*(\overline{WB},D,M),D,M)+\sum_{j'\in \hat{J}_n(\overline{WB},D,M)}b(j')\ell(j'),
$$ 
\normalsize
and where $\overline{WB}_{-n}$ are the contracts  in $\overline{WB}$ excluding the contracts  offered by fleet $n$, that is $\overline{WB}_{-{n}}=\{J^{\omega}\in \overline{WB}|\nexists j'\in J^{\omega}:n(j')=n  \}$.

Note, the payment each fleet $n\in N$ receives  is equal to the  additional value that fleet $n$ brings society. 
The payment is paid to the fleets upfront after accepting contracts ${J}^*(\overline{WB},D,M)$, regardless of the fine $f_{\ell}$ fleets may pay in practice if they fail to honor their  of contracts.

\begin{remark}\label{rem150824}
Note that  dynamic programming for this problem is not considered  efficient   with $\mathbb{T}>10$~\cite{jansen2013bin}. This issue can be resolved by considering theorem~\ref{th12032024} for each peak separately rather than for an entire day with $48$ $hhp$'s, and to have $\mathbb{T}$ equal only to the set of $hhp$'s in a single peak with no more  than ten $hhp$s. 
\end{remark}

\begin{remark}\label{rem140724}
The contract for independent EVs is similar to that of fleets, where we assume that each fleet has a single EV. Every fleet (with a single EV) may offer a single $\omega\in WB$ per peak by offering multiple contracts for different $hhp$'s for each peak period, where only a single contract per peak period can be accepted.
\end{remark}

\begin{remark}\label{rem310125}If an EV whishes to export its' excess energy in two bundles, then that EV could be considered as a fleet with two virtual EVs. However, the fleets' virtual EVs' schedules may be dependent on each other. We leave this open for future research. 
\end{remark}

Consider the following definition,
\begin{definition}\label{defMimport}
$m^{imported} $  indicates  the   pence 
  per $  kWh$  of  the $ \ell(j) $  which fleet $n(j)$ originally imported from the energy market.
  \end{definition}

Let $p:\mathbb{J}\rightarrow [0,1]$ denote  the private probability of fleet $n(j)$ fulfilling contract $j\in \mathbb{J}$. 
Let $c:\mathbb{J}\rightarrow \mathbb{R}_{\geq 0} $ denote the private expected   cost of fleet $n(j)$ attempting contract $j\in \mathbb{J}$.
For example, the expected cost of  contract $j$ for fleet $n(j)$ per may be calculated thus\footnote{We assume the  fleets' costs $\forall j\in \mathbb{J}:c(j)$ are endogenously determined (for further  discussion see~\cite{meir2017contract}).}:
\begin{equation}\label{eqCost}
\mathbb{E}[c(j)]=p(j)\cdot[ m^{imported}\cdot \ell(j)  +c^{BD}\cdot(\ell(j))]\cdot \ell(j)+f_{\ell}\cdot (1-p(j))+sceduling\_cost(j)  \end{equation}
\normalsize

We say that the \textit{utility}, denoted $u^n(\cdot)$, that fleet $n\in N$  attains is a function of $\overline{WB}$,  $\hat{D}$ and $M$. Specifically, $u^n(\overline{WB},\hat{D},M)=payment^n(\overline{WB},\hat{D}, M)-\sum_{j\in \hat{J}_n(\overline{WB},\hat{D},M)}\mathbb{E}(c(j))$ 
(see equations (\ref{eqCost}), (\ref{eqPaymentRule})),
and the  utility  the system operator will attain  shall be 
$$u^{system}(\overline{WB},\hat{D},M)
=\sum_{hhp'\in \mathbb{T}}[Min\{ \hat{D}^{hhp'} ,\sum_{j\in \hat{J}(\overline{WB},\hat{D}):hhp(j)=hhp'}\ell(j)\}]\cdot m^{hhp'}-\sum_{n\in N}payment^n(\overline{WB},\hat{D},M).$$
\normalsize

We assume that fleet owners and the system aim to maximize their own expected utility.

\section{Analysis of the Hour-Scheduling Mechanism}\label{secAnalysis}

\begin{theorem}\label{thEfficiency}
Given $WB'\subseteq WB$, $d\in \mathbb{R}^{|\mathbb{T}|}:0\leq d\leq D$ and $M$, the Hour-Scheduling mechanism leads to an efficient and feasible set of accepted contracts in $J^*\in J^{EF}$.
\end{theorem}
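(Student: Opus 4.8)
The plan is to verify directly the two conditions in the definition of $J^{EF}$: that the mechanism's output $\hat{J}(\overline{WB}',d,M)$ is (i) \emph{feasible}, i.e.\ $\hat{J}\in J^F$, and (ii) \emph{efficient}, i.e.\ $SocSave(\tilde{J},d,M)\le SocSave(\hat{J},d,M)$ for every $\tilde{J}\subseteq J^F$. Part (ii) is precisely the optimality statement already established by the dynamic program of theorem~\ref{th12032024}, so the substantive additions here are to make the feasibility argument explicit and to record why the induction in theorem~\ref{th12032024} ranges over feasible sets only. Since the statement is phrased for an arbitrary prefix $WB'\subseteq WB$ (identified with some $\breve{WB}^r$) and residual demand $0\le d\le D$, I would note that these general instances are exactly the cell invariants of the DP, so nothing beyond the full-instance argument is needed.

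For feasibility I would argue directly from the recursion. At stage $r$ the algorithm updates the incumbent solution for bundle index $r-1$ by adjoining \emph{at most one} contract $j^*\in J^{WB_r}$ (or none, when $SocSave^*$ fails to improve on the no-$WB_r$ option). Because the contract sets partition the offered contracts by bundle, namely $J^{offered}=\bigcup_{\omega\in WB}J^{\omega}$ with $J^{\omega'}\cap J^{\omega''}=\emptyset$ for $\omega'\neq\omega''$, and because each bundle index is visited exactly once in the fixed order over $WB$, the accumulated set picks up at most one contract per bundle. Hence $\forall \omega'\in WB:|\{\omega(j)|_{j\in \hat{J}}=\omega'\}|\leq 1$, which is exactly the defining condition of $J^F$.

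For efficiency I would run induction on the bundle index $r$ over the invariant of theorem~\ref{th12032024}: cell $(r,d,M)$ stores a feasible set $\hat{J}(\breve{\overline{WB}}^{r},d,M)$ that maximizes $SocSave(\cdot,d,M)$ among all feasible subsets of $\bigcup_{q\in[0,r]}J^{WB_q}$ served within residual demand $d$. The base case $r=0$ is the empty set with zero savings. In the inductive step, any optimal feasible solution over bundles $1,\dots,r$ either (a) uses no contract for $WB_r$, whence it is a feasible solution over bundles $1,\dots,r-1$ and the hypothesis applies, or (b) uses exactly one contract $j\in J^{WB_r}$ at period $hhp(j)$, and deleting $j$ leaves a feasible solution over bundles $1,\dots,r-1$ for the reduced demand $(d^{hhp(j)}-\ell(j),d^{-hhp(j)})$. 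Taking the better of the two is exactly the Bellman update defining $j^*$ and $SocSave^*$, and the final binary search over $d'\le d$ selects the best demand split; this yields optimality over $J^F$, so $\hat{J}=J^*\in J^{EF}$.

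The main obstacle is the optimal-substructure (exchange) step in case (b): I must confirm that removing a single accepted contract from a welfare-optimal feasible solution leaves a welfare-optimal feasible solution for the reduced demand. This is where the capping by $\min\{\hat{D}^{hhp},\cdot\}$ in the savings functional must be handled with care, since one has to check that reducing $d^{hhp(j)}$ by exactly $\ell(j)$ simultaneously credits the marginal savings contributed by $j$ and never credits export beyond demand. Provided $SocSave$ decomposes additively across periods once the per-$hhp$ served quantities are fixed, as in the per-$hhp$ $\min$ appearing in $u^{system}(\cdot)$, the residual subproblem is independent of the choice made at $hhp(j)$ and the exchange argument closes; establishing this separability is the one place where the appendix definition of $SocSave$ must be invoked explicitly rather than treated as a black box.
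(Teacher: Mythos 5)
Your proposal is correct and follows essentially the same route as the paper, whose entire proof is a one-line appeal to induction through the dynamic program of theorem~\ref{th12032024}; your feasibility argument (at most one contract per bundle, by the partition $J^{offered}=\bigcup_{\omega\in WB}J^{\omega}$ and the once-per-bundle visit order) and your Bellman-update induction are precisely the details that one-liner leaves implicit. The separability concern you flag at the end does close as you expect, since the appendix definition of $SocSave$ is additive over contracts and over per-$hhp$ excess-export penalties.
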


\begin{proof}
The proof follows by induction through the Hour-Scheduling mechanism (see theorem
~\ref{th12032024}).
\end{proof}

\begin{definition}\label{defTruthful}
A mechanism is \textit{truthful} if it is the dominant strategy for any fleet offering  a contract $j\in J^{offered}$ to announce a bid $b(j)$ equal to her expected cost according to equation~(\ref{eqCost}). 
\end{definition}

\begin{definition}\label{defBB}
A mechanism is \textit{Budget Balanced  (BB)} if the central platform 
receives a non-negative expected utility by applying the mechanism. 
\end{definition}

\begin{definition}\label{defIR}
A mechanism is \textit{Individually Rational (IR)} if 
by
participating in the 
mechanism, each fleet attains non-negative utility in expectation.\end{definition}

Following \cite{nisan2007computationally,jansen2013bin}, we show the following proposition~\ref{propTruthfulIRBB}.

\begin{proposition}\label{propTruthfulIRBB}
The Hour-Scheduling mechanism  described in section~\ref{secHourSchedulingMechanism} satisfies \begin{enumerate}
\item \label{proptruth} Truthfulness.
\item \label{propIR} Individual Rationality.
\item \label{propBB} Budget Balanced.\footnote{Note that definition~\ref{defBB} need not consider cases of contracts' default as in the case of default the fleet's fine should cover the resulting loss.}
\end{enumerate}
\end{proposition}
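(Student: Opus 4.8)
The plan is to treat the Hour-Scheduling mechanism as an instance of a Clarke-pivot VCG mechanism and to establish the three properties through the standard Groves argument, specialised to the value structure of this problem. Throughout I would work in the expected, no-default regime (the fine $f_\ell$ absorbs default losses, as noted in the footnote to Definition~\ref{defBB}), and use the truthful identification of each bid with the true expected cost, $b(j)\cdot\ell(j)=\mathbb{E}[c(j)]$ (equation~(\ref{eqCost})). First I would fix notation by decomposing society's savings as $SocSave(J,D,M)=V(J)-C(J)$, where $V(J)=\sum_{hhp\in\mathbb{T}}\min\{\hat D^{hhp},\sum_{j\in J:\,hhp(j)=hhp}\ell(j)\}\cdot m^{hhp}$ is the avoided market-purchase value and $C(J)=\sum_{j\in J}b(j)\ell(j)$ is the total reported cost. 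Writing $J^{*}=\hat J(\overline{WB},D,M)$ and $W_{-n}(J)=V(J)-\sum_{j\in J:\,n(j)\neq n}b(j)\ell(j)$ for the welfare of $J$ net of every fleet's cost except $n$'s, I would observe that the payment rule~(\ref{eqPaymentRule}) reads $payment^n=W_{-n}(J^{*})-SocSave(\hat J(\overline{WB}_{-n},D,M))$, i.e. the Clarke externality paid to seller $n$.

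For part~(\ref{proptruth}) I would run the Groves argument. Fix the reports of all fleets other than $n$ and let $n$'s report induce an allocation $J$. Its true utility is $payment^n-\sum_{j\in J:\,n(j)=n}\mathbb{E}[c(j)]$, which by the payment rule equals $V(J)-\sum_{j\in J:\,n(j)\neq n}b(j)\ell(j)-\sum_{j\in J:\,n(j)=n}\mathbb{E}[c(j)]-SocSave(\hat J(\overline{WB}_{-n},D,M))$. The final term is independent of $n$'s report, so $n$ effectively prefers the allocation that maximises $\Phi(J):=V(J)-\sum_{j\in J:\,n(j)\neq n}b(j)\ell(j)-\sum_{j\in J:\,n(j)=n}\mathbb{E}[c(j)]$. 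Under truthful reporting $n$'s reported contract cost equals its true expected cost, so $\Phi$ coincides with the objective $SocSave(\cdot)$ that the dynamic program of Theorem~\ref{th12032024} maximises; hence the allocation returned under truth-telling is exactly the maximiser of $\Phi$, and any misreport can only steer the mechanism to an allocation weakly worse under $\Phi$. Truth-telling is therefore a dominant strategy.

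Part~(\ref{propIR}) follows as a corollary: substituting the truthful identity into the payment rule collapses fleet $n$'s utility to $u^n=SocSave(J^{*})-SocSave(\hat J(\overline{WB}_{-n},D,M))$, its marginal contribution to welfare. Since $\hat J(\overline{WB}_{-n},D,M)$ is itself a feasible allocation for the full problem (removing $n$'s contracts only shrinks the option set of the dynamic program), efficiency of $J^{*}$ gives $SocSave(J^{*})\geq SocSave(\hat J(\overline{WB}_{-n},D,M))$, whence $u^n\geq 0$.

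The budget-balance claim~(\ref{propBB}) is where I expect the main obstacle, since VCG is not budget balanced in general and the argument must exploit the specific form of $V$. The key step is a per-fleet payment bound: replacing the optimum $\hat J(\overline{WB}_{-n},D,M)$ by the feasible suboptimal allocation $J^{*}\setminus\hat J_n$ gives $payment^n\leq W_{-n}(J^{*})-SocSave(J^{*}\setminus\hat J_n)=V(J^{*})-V(J^{*}\setminus\hat J_n)$. Because $x\mapsto\min\{\hat D^{hhp},x\}$ is $1$-Lipschitz and the relevant peak prices satisfy $m^{hhp}\geq 0$, removing $\hat J_n$ lowers $V$ by at most $\sum_{j\in\hat J_n}m^{hhp(j)}\ell(j)$, so $payment^n\leq\sum_{j\in\hat J_n}m^{hhp(j)}\ell(j)$; summing over fleets gives $\sum_{n\in N}payment^n\leq\sum_{j\in J^{*}}m^{hhp(j)}\ell(j)$. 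It then remains to argue that an efficient allocation need not oversupply any period --- a contract driving $\sum_j\ell(j)$ strictly above $\hat D^{hhp}$ adds cost but no value and can be dropped without reducing $SocSave$ --- so that $V(J^{*})=\sum_{j\in J^{*}}m^{hhp(j)}\ell(j)$. Combining the two, $\sum_{n\in N}payment^n\leq V(J^{*})$, and since $u^{system}=V(J^{*})-\sum_{n\in N}payment^n$ by definition, we obtain $u^{system}\geq 0$. The delicate points I would handle carefully are the no-oversupply reduction (and the zero-marginal-value boundary case, where the inequality is still preserved) and confirming the Lipschitz/price-sign bound on the marginal value of each fleet's energy.
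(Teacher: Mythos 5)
Your treatment of parts~(\ref{proptruth}) and~(\ref{propIR}) is correct and essentially coincides with the paper's own argument: the paper runs the same Groves argument for truthfulness (isolating $SocSave(\hat{J}(\overline{WB}_{-n},\hat{D},M))$ as the report-independent pivot term), and its IR proof is the same marginal-contribution observation you make, merely phrased as a proof by contradiction. Your per-fleet payment bound in part~(\ref{propBB}), $payment^n \leq W_{-n}(J^*)-SocSave(J^*\setminus\hat{J}_n)=V(J^*)-V(J^*\setminus\hat{J}_n)$, is also sound, and is in fact justified more carefully than the paper's corresponding step (equation~(\ref{eq300125})), which is only asserted verbally.

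The gap is in the final step of part~(\ref{propBB}): the claim that an efficient allocation never oversupplies, i.e.\ $V(J^*)=\sum_{j\in J^*}m^{hhp(j)}\ell(j)$, is false, and your reduction (``a contract driving supply above $\hat{D}^{hhp}$ adds cost but no value and can be dropped without reducing $SocSave$'') fails for a contract that straddles the demand level. Concretely, take a single $hhp$ with $\hat{D}^{hhp}=10$, price $m>0$, and two offered contracts with $\ell(j_1)=8$, $\ell(j_2)=5$ and per-kWh bids $b_1<5m/8$, $b_2<2m/5$: accepting both gives $SocSave=10m-8b_1-5b_2$, which strictly beats both $8m-8b_1$ and $5m-5b_2$, so the optimum oversupplies by $3$ kWh; dropping $j_2$ loses $2m$ of value but saves only $5b_2<2m$ of cost. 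Here $\sum_{j\in J^*}m^{hhp(j)}\ell(j)=13m>10m=V(J^*)$, so your chain $\sum_n payment^n\leq\sum_{j\in J^*}m^{hhp(j)}\ell(j)=V(J^*)$ breaks at the equality. The paper does not attempt to prove this step; it simply assumes no oversupply outright as a stated simplification (equation~(\ref{eq08012025simplicity})), so under that assumption your argument is fine. If you want BB without the assumption, do not pass through $\sum_j m^{hhp(j)}\ell(j)$ at all: since $m^{hhp}\geq 0$ and $t\mapsto\min\{\hat{D}^{hhp},t\}$ is concave, $V$ is submodular; ordering the fleets arbitrarily, writing $S_k=\bigcup_{i\leq k}\hat{J}_i$, and using $S_{k-1}\subseteq J^*\setminus\hat{J}_k$ gives $V(J^*)-V(J^*\setminus\hat{J}_k)\leq V(S_k)-V(S_{k-1})$, and summing your per-fleet bounds telescopes to $\sum_n payment^n\leq V(J^*)$, which is exactly $u^{system}\geq 0$. (In the example above this yields $5m+2m=7m\leq 10m$, as it should.)
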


\begin{proof}
See appendix~\ref{apProofs}.
\end{proof}

\begin{lemma}\label{lmProbEstimation}
Fleet $n\in N$ will not offer  a contract $j=(n,b, hhp,\omega,f_{\ell})$ unless $p(j)\geq \frac{f_{\ell}(j)+c(j)-b(k)}{f_{\ell}(j)}$, such that ${p}(j)$ is fleet $n$'s private information regarding  the probability   of fulfilling contract  $j$,  the cost of attempting contract $j$ is    
$c(j)=p(j)\cdot [c^{BD}\cdot\ell(j)+m^{hhp^{imported}(j)}\cdot \ell(j)]+f_{\ell}\cdot \ell(j)\cdot (1-p(j))+Sceduling \ cost $, and $m^{hhp^{imported}}(j)$ is according to definition~\ref{defMimport}.    
\end{lemma}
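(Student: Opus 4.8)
The plan is to treat this as a participation (individual-rationality) argument applied to a single contract: a rational fleet offers $j$ only when doing so yields nonnegative expected utility, and the claimed bound is precisely that nonnegativity condition solved for $p(j)$. (I read the ``$b(k)$'' in the statement as a typo for $b(j)$.) This lemma is the natural bridge to proposition~\ref{propBoundForProbabilty}, since it lets us infer a lower bound on the fleet's private success probability $p(j)$ from the publicly announced bid $b(j)$ and fine $f_{\ell}(j)$.

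First I would write the fleet's expected utility from offering $j$ and having it accepted. By the semantics of the bid, $b(j)$ is the minimal per-$kWh$ payment fleet $n(j)$ will accept, and by the individual-rationality part of proposition~\ref{propTruthfulIRBB}(\ref{propIR}) the payment the fleet actually receives is at least its declared value; hence the worst case for the fleet is that it is paid exactly $b(j)$. Against this the fleet weighs its cost of attempting the contract: with probability $p(j)$ it succeeds and pays the export cost $c^{BD}\cdot\ell(j)+m^{imported}\cdot\ell(j)$ together with the scheduling cost, and with probability $1-p(j)$ it defaults and pays the fine $f_{\ell}(j)$. This is exactly the decomposition appearing in the statement and consistent with equation~(\ref{eqCost}).

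Next I would impose the rationality condition, namely that the fleet offers $j$ only if its expected utility is nonnegative. Writing this as $b(j)\geq c(j)+f_{\ell}(j)\,(1-p(j))$, where $c(j)$ denotes the success-contingent export-and-scheduling cost, I then isolate $p(j)$ by multiplying through by $f_{\ell}(j)>0$ and rearranging, which gives $p(j)\geq \frac{f_{\ell}(j)+c(j)-b(j)}{f_{\ell}(j)}$ — the claimed inequality. I would also record the standing caveat that the bound is informative only when its right-hand side is positive (equivalently, when $b(j)<c(j)+f_{\ell}(j)$); otherwise it holds trivially because $p(j)\in[0,1]$.

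The routine algebra is not the difficulty. The delicate step is pinning down the utility model precisely and keeping the accounting consistent. In particular, one must settle whether the payment $b(j)\cdot\ell(j)$ is received unconditionally (as the upfront-payment convention of section~\ref{secHourSchedulingMechanism} suggests) or only upon successful delivery, and one must reconcile the two slightly different cost expressions that appear in equation~(\ref{eqCost}) and in the lemma's statement. The main pitfall is double-counting the fine: the $c(j)$ that enters the numerator of the bound must be the \emph{fine-excluded} portion of the cost, since the default penalty $f_{\ell}(j)(1-p(j))$ is already carried separately in the participation inequality, and the per-$kWh$ versus total quantities ($b$ versus $b\cdot\ell$, $f_{\ell}$ versus $f_{\ell}\cdot\ell$) must be tracked uniformly. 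Once the guaranteed receipt is fixed at $b(j)$ and the fine is charged only on the default event, the bound follows immediately.
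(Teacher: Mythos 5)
Your proposal is correct and takes essentially the same route as the paper: the participation (individual-rationality) condition that the bid must cover the fleet's expected cost, $b(j)\geq c(j)+f_{\ell}(j)\,(1-p(j))$, rearranged to isolate $p(j)$. Your caveat about double-counting the fine is well taken — the paper's own statement defines $c(j)$ with the term $f_{\ell}\cdot\ell(j)\cdot(1-p(j))$ already included and yet adds $f_{\ell}(j)$ again in the numerator of the bound, an inconsistency your fine-excluded reading of $c(j)$ is needed to repair, so your version is in fact the algebraically consistent form of the paper's argument.
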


\begin{proof}
Note that fleet $n$ will not offer  contract $j$ unless its expected utility is non-negative. Because the bid $b(j)$ is the minimal payment fleet $n$ will attain for purchasing contract $j$ (see proposition~
\ref{propTruthfulIRBB}(\ref{propIR})), fleet $n$ will not offer to buy  contract $j$ according to equation~(\ref{eqCost}), unless: 
$ b(j)\geq c(j)= p(j)\cdot [c^{BD}\cdot \ell(j)+m^{hhp^{imported}(j)}\cdot \ell(j)]+f_{\ell}\cdot \ell(j)\cdot (1-p(j))+Sceduling \ cost $. 
Therefore,  we have:
\begin{equation}\label{eq10032023}
p(j)\geq \frac{f_{\ell}(j)+c(j)-b(j)}{f_{\ell}(j)}
\end{equation}
\end{proof}
\begin{lemma}\label{lm141123}

A lower bound for the probability of fulfilling an offered contract $j\in \mathbb{J}$ is  
\begin{equation}\label{eqLowerBoundProbabitly}
\hat{p} (j)=(f_l(j)- b^n (j))/f_l(j).
\end{equation}
\end{lemma}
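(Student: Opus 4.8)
The plan is to obtain the stated lower bound directly by relaxing the sharper (but unobservable) bound already established in Lemma~\ref{lmProbEstimation}. The essential point is that $\hat{p}(j)=(f_\ell(j)-b^n(j))/f_\ell(j)$ depends only on the announced bid $b^n(j)$ and the declared fine $f_\ell(j)$, both of which are observable to the platform, whereas the bound from Lemma~\ref{lmProbEstimation} additionally involves the private cost $c(j)$.

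First I would invoke Lemma~\ref{lmProbEstimation}, which asserts that a fleet $n$ offering contract $j$ must satisfy $p(j)\geq (f_\ell(j)+c(j)-b^n(j))/f_\ell(j)$ (equation~(\ref{eq10032023})), since otherwise its expected utility would be negative and, by individual rationality, it would not offer the contract in the first place. Next I would use the fact that the cost function is non-negative, $c:\mathbb{J}\to\mathbb{R}_{\geq 0}$, so that $c(j)\geq 0$. Assuming $f_\ell(j)>0$ (which is needed merely for the fraction to be well-defined), deleting the non-negative term $c(j)$ from the numerator can only decrease the quotient, giving $(f_\ell(j)+c(j)-b^n(j))/f_\ell(j)\geq (f_\ell(j)-b^n(j))/f_\ell(j)=\hat{p}(j)$. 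Chaining the two inequalities yields $p(j)\geq\hat{p}(j)$, which is precisely the claimed lower bound.

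The argument is essentially a one-line relaxation, so there is no serious obstacle; the only point requiring care is the direction of the inequality when removing $c(j)$. Because $c(j)\geq 0$ and the denominator $f_\ell(j)$ is positive, the relaxed quantity is genuinely no larger than the original bound, so it remains a valid lower bound rather than an upper one. I would emphasize in the write-up that the entire value of this weaker bound lies in its observability: the platform can evaluate $\hat{p}(j)$ from the public contract parameters alone, with no access to the fleet's private cost, which is exactly what makes it usable for deducing the demand-coverage guarantees referenced in proposition~\ref{propBoundForProbabilty}.
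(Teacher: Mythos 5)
Your proposal is correct and follows essentially the same route as the paper's own proof: both start from the bound $p(j)\geq \bigl(f_{\ell}(j)+c(j)-b(j)\bigr)/f_{\ell}(j)$ of Lemma~\ref{lmProbEstimation} and then relax it by discarding the private cost term, using $c(j)\geq 0$ (the paper phrases this as substituting the conservative estimate $\hat{c}^n(j)=0$) together with $f_{\ell}(j)>0$ to conclude $p(j)\geq \hat{p}(j)=\bigl(f_{\ell}(j)-b(j)\bigr)/f_{\ell}(j)$. The only cosmetic difference is that the paper additionally invokes truthfulness (Proposition~\ref{propTruthfulIRBB}) to write the first relation as an equality tying the bid to the cost, whereas you keep it as the inequality from Lemma~\ref{lmProbEstimation}, which suffices.
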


\begin{proof}

Due to lemma~\ref{lmProbEstimation} and proposition~\ref{propTruthfulIRBB}, the Hour-Scheduling mechanism induces each fleet to announce 
a bid $b(j)$ (per contract) such that
$$-b(j)+f_{\ell}\cdot (1-p(j))=p(m^{hhp^{imported}(j)}\ell(j)+c^{BD}\cdot(\ell(j)))+(1-p(j))\cdot f_{\ell} + scheduling(j)=c(j)$$
where $p(j)$ is the probability of fulfilling contract $j$,  $m^{hhp^{imported}}(j)$ was the cost of $1$ $kW$  for fleet $n$  (when fleet $n$ originally imported their $kW$) and  $c^{BD}\cdot(\ell(j))$ is the battery deterioration cost for contact $j$. $p(j)$, $m^{hhp^{imported}}(j)$ and $c^{BD}(\ell(j))$  are private information.

In order to assess a conservative estimation of the probability of success of a contract $j$, we assume the cost $c^n(j)$ 
is equal to zero (i.e. $\hat{c}^n(j)=0$, where $\hat{c}^n(j)$ is a lower bound on the cost), and attain the 
following:
$$\cfrac{c^n(j)-b(j)+f_{\ell}}{f_{\ell}}=p(j)\geq \cfrac{f_{\ell}-b(j)}{f_{\ell}}=\hat{p}(j)$$

where $p(j)$ is the true probability of satisfying the contract, and $\hat{p}(j)$ is it's lower 
bound.
\end{proof}

Note that  the assessed probability for  contracts' default    (lemma~\ref{lm141123}) may be higher than their true probability of default. 
These inaccurate assessments of the probability  do not affect the fleets' payments or their allocated contracts in the Hour-Scheduling mechanism, but may affect fleets' payments in the Frequency Regulating Mechanism (see section~\ref{secFreq}).

\section{The Frequency Regulation Mechanism }\label{secFreq}

Frequency control refers to the need to ensure that the grid maintains a reliable power supply,
so that the power grid does not collapse. This  control is maintained through a number of avenues including ancillary services   provided to support the continuous flow of electricity, ensuring the demand for electrical energy is met in real time
with a stable voltage range (for example,
 regulators in the US require  voltage range to be  ±5\% of the nominal voltage\footnote{See \url{https://www.next-kraftwerke.com/knowledge/ancillary-services}, as well as~\cite{luo2018review}.}).

Although contracts were planned and applied to cover  energy demand,  situations often arise in which for certain $hhp\in \mathbb{T}$  energy demand  is not covered by existing energy supplying contracts  due to   fleets'  defaulting from their agreed contracts or due to actual energy demand exceeding  expected  energy demand. In such cases, the remaining demand is  not met and is accommodated either through voltage lowering,  electricity blackouts, or is covered through the {Balancing market} which can increase carbon emissions and be quite costly. 
Furthermore, a situation may arise in which there is an excess supply of energy at a certain $hhp\in \mathbb{T}$, which can overload the grid~\cite{denholm2019timescales}.

We introduce the \textit{Frequency Regulating (FR)}  mechanism for exporting and importing energy alongside the Balancing market which 
enables EV owners who have excess $kW$ (or have not fully charged batteries) to export (or import)  energy for a profitable fee, and thus reduce their own costs as well as  society's costs  in addition to carbon emissions.

The FR Mechanism enables EVs to connect to the platform for discharging or charging energy. 
An EV may connect at any $hhp\in \mathbb{T}$ to the platform, and is paid for maintaining  connection with the platform with availability for energy export or import, as well as for energy exported from (imported to) their battery. 

In order to trade energy efficiently, EVs must consider their batteries' limitations and characterizations. Specifically, an Electric Vehicle's battery has a $SoC$ \textit{(State of Charge)} which describes the number of $kW$ their battery is charged. In order to lessen  battery degradation, on average and depending on the specific features of the battery, the SoC should remain between   20\% to 80\% of the total maximal amount of $kW$ that can be charged~\cite{fernandez2013capacity, thomas2009fuel}.

We consider a set of $M$ EVs that are interested in importing (exporting) $kWh$ to (from) the grid.
We denote by $plugged(hhp)\subseteq M$ the set of EVs connected  to the platform at $hhp\in \mathbb{T}$.
Let
 $x^m_{max}(hhp)$ ($x^m_{min}(hhp)$): describe the desirable maximal (minimal) level of $SoC$ in the battery of EV $m\in M$ at the end of $hhp\in \mathbb{T}$. Specifically,  
$\forall m\in M$, $\forall hhp\in \mathbb{T}:SoC^m(hhp)$ is the $SoC$ in EV $m$'s  battery at $hhp\in \mathbb{T}$. 

Battery discharging has a negative impact on EVs' battery life~\cite{guo2017bidding}, i.e. batteries suffer extra life loss for discharging due to battery degradation~\footnote{The battery life loss is assumed to be affected by a number of aspects, including the battery purchase cost, battery capacity,  designed life cycles, etc.}.
For simplicity, we assume in this paper that the cost of battery degradation $c^{BD}$ is constant for every exported $kWh$ and  is common knowledge.

The FR mechanism is applied as follows:

\begin{enumerate}
\item $\forall hhp\in \mathbb{T}$: each EV $m\in plugged(hhp)$  announces the minimal (maximal) level of $SoC$ they are willing to have in  their battery at the end of  $hhp$, i.e. $x_{min}^m (hhp)$ ($x_{max}^m (hhp)$). The platform reads the $SoC^m(hhp)$  of every $m\in plugged(hhp)$.

\item The platform calculates the available quantity of energy for export (import), i.e. $x^m_{availableEX}(hhp)$ ($x^m_{availableIM}(hhp)$), where: 
 $x^m_{availableEX}(hhp)=SoC^m(hhp)  - x_{min}^m (hhp)$, and $x^m_{availableIM}(hhp)=x_{max}^m (hhp)-SoC^m(hhp)$.
 
\item The platform imports  (exports) $kWh$ from (to) all EVs $m\in Plugged(hhp)$ that are connected to the platform via the FR mechanism according to energy demand and their availability to export (import) $kWh$.
\item Any remaining $kWh$ which are still not covered, are supplemented by importing 
$kWh$ from the Balancing market at $m^{hhp}_{Balancing}$
per $kWh$, and any remaining excess $kWh$ may be exported to the Balancing market or curtailed\footnote{See \url{https://www.next-kraftwerke.com/knowledge/curtailment-electricity}.}.

\end{enumerate}

Let us  denote  $J^{active}$
 as the set of accepted contracts 
through the Hour-Scheduling mechanism (see section~\ref{secHourSchedulingMechanism}) and via other  
exporting and importing energy mechanisms. i.e.  
$J^{active}$ describes the set of active contracts that have 
been accepted and haven't defaulted. 
For simplicity, in this paper we will analyze the FR mechanism  under the assumption that  contracts $J^{active}$ consist only of the accepted contracts $\hat{J}(\overline{WB},\hat{D}, M)$ from the Hour-Scheduling mechanism (see section~\ref{secHourSchedulingMechanism}), i.e. $J^{active}\leftarrow J^{accepted}$. 

Let  $p_{hhp}(y,J^{active})$ denote the true probability of $y$ $kWh$ discharged to the grid via contracts $J^{active}$ at $hhp\in \mathbb{T}$, and let $\hat{p}_{hhp}(y,J^{active}$) denote an assessed   probability that  $y$ $kWh$ will   be supplied given contracts 
$J^{active}$ at $hhp\in \mathbb{T}$
  (see proposition~\ref{prop20012023b}).

Recall $D^{hhp}$ is the actual (see section~\ref{section2}) expected demand at $hhp$, and let $X_{hhp}^{max}(J^{active})$ denote the maximal exported energy if all contracts $J^{active}$ are honored, such that: 
\begin{equation}\label{eqXmax}
    \forall hhp'\in \mathbb{T}:X_{hhp'}^{max}(J^{active})=\sum_{j\in J^{active}: hhp(j)=hhp'}\ell(j)
\end{equation}

For every   $y\in [0,D^{hhp}]|_{hhp\in \mathbb{T}}$, there is a probability of  $p_{hhp}(y,J^{active})$ that $(D^{hhp}-y )$ $kWh$ will not be supplied, and for cases where $D^{hhp}\leq X_{hhp}^{max}$, then
for every $y\in [D^{hhp}, X_{hhp}^{max}]|_{hhp\in \mathbb{T}}$,  
there is a probability of  $p_{hhp}(y, J^{active}))$ of an excess of $(y-D^{hhp})$ $kWh$.

The FR mechanism offers the following contract to each EV $m\in plugged(hhp)$ that is connected to the platform at $hhp\in peak\in \mathbb{T}$,

\small
\begin{equation}\label{eqPaymentBalance}
payment^m_{FR}(x^{max}_m(hhp), x^{min}_m(hhp),hhp , SoC^m(hhp), J^{active}, plugged({hhp}))=
\end{equation}
$$=\sum_{y\in [0,...,D^{hhp}]}const^{EX} \cdot \hat{p}_{hhp}(y,J^{active})\cdot m^{hhp}_{Balancing}\cdot y \cdot \cfrac{x^m_{avilableEX}(hhp) +x^m_{exported}(peak^{-hhp})}{\sum_{q\in plugged(hhp)}(x^q_{avilableEX}(hhp) +x^q_{exported}(peak^{-hhp}))}+$$
$$+\sum_{y\in [D,...,X^{max}_{hhp}]}const^{IM} \cdot \hat{p}_{hhp}(y,J^{active})\cdot m^{hhp}_{Balancing}\cdot y \cdot \cfrac{x^m_{availableIM}(hhp)+x^m_{imported}(peak^{-hhp})}{\sum_{q\in plugged(hhp)}(x^q_{availableIM}(hhp)+x^q_{imported}(peak^{-hhp}))} +$$
$$+X_m^{ exported}(hhp)
\cdot [c^{BD} +m^{hhp}_{day-ahead}]$$
\normalsize

where  $x^m_{exported}peak^{-hhp})$ 
 (or $x^m_{imported}(peak^{-hhp}))$ 
describes the   quantity of $kWh$ that   were exported (or imported) in $peak$ excluding $hhp$ ( i.e.  $\forall hhp'
\in peak\backslash hhp$) by EV  $m\in M$.
This variable   discourages  EV's from endeavoring to  export  (import) their energy at the most profitable $hhp$s and avoid connecting to the platform at less profitable $hhp$s in $peak$, and specifically, it motivates $EV$s to remain connected to the platform, independently of the distribution of profits among all $hhp\in peak$. Furthermore, let
$x^m_{exported}(hhp)$ (or $x^m_{imported}(hhp)$) describe the quantity of $kWh$ that are actually exported (or imported) during $hhp$ by EV $m\in M$.

$ const^{EX}, const^{IM}$  are predetermined constants and are in the platforms' database. $\forall hhp
\in \mathbb{T}: x^m_{imported}(peak^{-hhp}), x^m_{exported}(peak^{-hhp}), x^m_{exported}(hhp), x^m_{imported}(hhp)$ and $ m^{hhp}_{Balancing}$ are  continuously updated in the  platforms' database.  $x^m_{max}$ and $x^m_{min}$ are read by  the platform from any connected EV $m\in plugged(\cdot)$,  based on which $x^m_{availableEX}$ and $x^m_{availableIM}$ are  
 calculated by the platform anew for every connecting EV $m\in M$.

We wish to avoid a situation in which a fleet should manipulate the system by 
avoiding the Hour-Scheduling mechanism in order to attain a  payment from 
the FR mechanism (where there is no penalty for default). 
By adapting `$const^{EX}$', this manipulation can  become less profitable for fleets and 
EV owners.\footnote{Note that while exporting energy in the Hour-Scheduling mechanism via contracts $j\in J^{active}$,  EV $m$ can connect to the Frequency Regulating mechanism, by announcing  $x_{min}^m(hhp)$ that will enable export of $\ell(j)$ and other anticipated $kWh$  obligations as well as the minimal level of $SoC$ the EV will accept. }
Specifically it would be interesting to research the values of $const^{EX}$ and $const^{IM}$  effects on the platforms' profits, society's profits  and on carbon emissions; we leave this open for future research.

The first two expressions in the FR mechanism are constructed in order to motivate EVs to remain connected to the platform for ancillary service requirements, while the third expression `$x_m^{ exported}(hhp)
\cdot [c^{BD} +m^{hhp}_{day-ahead}]$' offers EVs compensation for their exported energy. Specifically, the true costs for exporting energy to the platform are the  EV's battery deterioration cost $x_m^{ exported}(hhp)
\cdot c^{BD} $ as well as the EV's cost for importing the energy  $x_m^{ exported}(hhp)
\cdot m^{hhp}_{imported}$, both of which are private information.

\begin{lemma}\label{lmProbNP}
Computing the probability of $\forall y\in \mathbb{R},\forall hhp\in \mathbb{T}:p_{hhp}(y,J^{active})$  is an NP-hard problem.
\end{lemma}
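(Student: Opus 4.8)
The plan is to show that a polynomial-time algorithm for evaluating $p_{hhp}(y,J^{active})$ would decide an NP-complete problem, by reducing from Subset-Sum. First I would make the distribution underlying $p_{hhp}(\cdot)$ explicit. Fix an $hhp$ and set $J_{hhp}=\{j\in J^{active}\mid hhp(j)=hhp\}$. Each contract $j\in J_{hhp}$ is honored independently with probability $p(j)$, contributing $\ell(j)$ $kWh$ when honored and $0$ otherwise, so the exported quantity at $hhp$ is the random sum $\sum_{j\in J_{hhp}}\ell(j)\,B_j$ with independent $B_j\sim\mathrm{Bernoulli}(p(j))$, giving
\[
p_{hhp}(y,J^{active})=\sum_{\substack{S\subseteq J_{hhp}\\ \sum_{j\in S}\ell(j)=y}}\ \prod_{j\in S}p(j)\prod_{j\in J_{hhp}\setminus S}\bigl(1-p(j)\bigr).
\]

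Next I would give the reduction. Given a Subset-Sum instance of positive integers $a_1,\dots,a_k$ and a target $T$, construct a single $hhp$ carrying $k$ active contracts $j_1,\dots,j_k$ with $\ell(j_i)=a_i$ and $p(j_i)=\tfrac12$ for every $i$; since every value in $[0,1]$ is an admissible (private) fulfillment probability, this is a legitimate model instance. With uniform weights every subset contributes $2^{-k}$, so $p_{hhp}(T,J^{active})=2^{-k}\cdot\bigl|\{S\subseteq\{1,\dots,k\}:\sum_{i\in S}a_i=T\}\bigr|$, whence $p_{hhp}(T,J^{active})>0$ if and only if the Subset-Sum instance is a yes-instance. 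Consequently any procedure that evaluates $p_{hhp}(y,J^{active})$ in polynomial time decides Subset-Sum in polynomial time by testing positivity of its output; as Subset-Sum is NP-complete, computing $p_{hhp}(\cdot)$ is NP-hard. The same construction in fact yields \#P-hardness, since $2^k\,p_{hhp}(T,J^{active})$ counts exactly the subsets of $\{a_1,\dots,a_k\}$ summing to $T$.

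The step requiring care is the encoding: the reduction genuinely needs the bundle sizes $\ell(j)$ to be allowed to be arbitrary integers given in binary, and this is precisely where the FR setting departs from the ``bounded integers'' hypothesis of theorem~\ref{th12032024}. Under that earlier hypothesis the exact mass could be recovered by a pseudo-polynomial convolution over the demand axis, so I would emphasize that the hardness here is a consequence of large (binary-encoded) $\ell(\cdot)$ and that no small polynomial bound on the bundle sizes is imposed in the Frequency Regulating mechanism. I would also note, for completeness, that independence of the honoring events $B_j$ is what makes the product form above valid; the argument only uses that independent instance, so it does not rely on any stronger assumption about correlations between defaults.
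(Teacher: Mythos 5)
Your proof is correct, but it takes a genuinely different route from the paper's. The paper also argues by reduction from a knapsack-type problem, but with a different encoding: each Knapsack item $(v_i,w_i)$ is mapped to a contract $j$ with $\ln(p(j))=v_i$ and $\ell(j)=w_i$, so that the probability that a set of contracts is jointly honored becomes the exponential of the total ``volume,'' and hardness is inherited from the optimization version of Knapsack (the paper's write-up of this step is quite informal---the final manipulation relating $\ln(\bar p_{hhp})$ and its reciprocal to filling a sack of size $Z$ is not spelled out rigorously). You instead make the underlying distribution explicit (independent Bernoulli honoring events and the resulting product formula), reduce from Subset-Sum by setting every $p(j)=\tfrac12$, and test positivity of $p_{hhp}(T,J^{active})$; this is a cleaner decision-style reduction and yields the strictly stronger conclusion of \#P-hardness, which is arguably the natural complexity class for exact probability evaluation. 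Your approach also makes explicit two points the paper glosses over: that the hardness hinges on the bundle sizes $\ell(j)$ being binary-encoded integers with no small bound (Subset-Sum and Knapsack are only weakly NP-hard, which is consistent with the pseudo-polynomial dynamic program of theorem~\ref{th12032024}), and that independence of contract defaults is what licenses the product form of $p_{hhp}$. One small caveat: the paper's nomenclature is ambiguous between ``exactly $y$'' and ``at least $y$'' kWh supplied; you adopt the exact-$y$ reading, which matches the probability mass function in equation~(\ref{eq01012025}), and since your construction uses integer quantities, the tail reading is handled by comparing the two queries at $T$ and $T+1$, so nothing in your argument breaks either way.
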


\begin{proof}
See appendix~\ref{apProofs}.
\end{proof}

Lemma~\ref{lmProbNP} shows that computing $p_{hhp}(y, j^{active})$ is NP-hard.  Therefore, we approximate  this probability 
by considering    contracts'  $j\in J^{active}$  average  probabilities of being honored. We denote this approximated probability as  $\hat{p}(y,J^{active})$.  

Specifically, let us denote the average assessed probability of satisfying a contract per $kWh$ as:
\begin{equation}\label{eq04012025}
\bar{p}=\frac{\sum_{j\in J^{active}} [\ell (j)\cdot \hat{p}(j)]}{\sum_{j\in J^{active}}\ell (j) ,}
\end{equation}

where $\hat{p}(j)$ is determined according to equation~(\ref{eqLowerBoundProbabitly}). 

Let  ${X}_{hhp}^{max}$ describe the  maximal quantity of imported $kWh$   (see equation~(\ref{eqXmax})).
Let us denote $\overline{\ell}$  the average quantity of contracts' offered $kWh$, i.e.

\begin{equation}\label{eq04012025b}
\overline{\ell}=\frac{\sum_{j\in J^{active}}\ell(j)}{|J^{active}|}.
\end{equation}

\begin{proposition}\label{prop20012023b}
Assuming all accepted contracts have the probability $\bar{p}$ of exporting $\bar{\ell}$ $kWh$, according to equations~(\ref{eq04012025}) and~(\ref{eq04012025b}), 
the probability that the mechanism will fail to supply (or will supply an excess of) ${y'}$  $kWh$ to the platform  $\forall hhp\in \mathbb{T}$, is:
\begin{equation}\label{eq01012025}
\hat{p}_{hhp}(y, J^{active})=\cfrac{|\frac{{X}^{max}_{hhp}}{\bar{\ell}}|!}{[\frac{y}{\ell}]!\cdot \left[[\frac{{X}^{max}_{hhp}}{\bar{\ell}}]-[\frac{y}{\bar{\ell}}]\right]!}\cdot [(\bar{p}_{hhp})^{\frac{y}{\bar{\ell}}}]\cdot [(1-\bar{p}_{hhp})^{([\frac{{X}_{hhp}^{max}}{\bar{\ell}}]-[\frac{y}{\bar{\ell}}])}]
\end{equation}
where
$ \ y'={D}^{hhp}-y \ $
  if there is a deficit of $y'$ $kWh$ (and  $ \   y'=y-{D}^{hhp} $ if there is an excess $y'$ $kWh$ supplied).

\end{proposition}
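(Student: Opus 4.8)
The plan is to recognize that, once we impose the homogeneity assumption of the proposition, the quantity of energy actually delivered at a fixed $hhp$ is (up to the scaling factor $\bar{\ell}$) a sum of independent, identically distributed Bernoulli random variables, so that $\hat{p}_{hhp}(y,J^{active})$ is simply a binomial point mass. Concretely, I would first argue that replacing the heterogeneous contracts $J^{active}$ by a collection of identical surrogate contracts — each offering $\bar{\ell}$ kWh (equation~(\ref{eq04012025b})) and each honored independently with probability $\bar{p}_{hhp}$ (equation~(\ref{eq04012025})) — means the total number of such surrogate contracts acting at $hhp$ equals the maximal deliverable energy divided by the per-contract amount, namely $N:=\lfloor X_{hhp}^{max}/\bar{\ell}\rfloor$ (using equation~(\ref{eqXmax})). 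This identification of the number of independent trials is the first key step.

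Next I would translate the target event into binomial language. Delivering exactly $y$ kWh under the homogeneous model happens precisely when exactly $k:=[y/\bar{\ell}]$ of the $N$ surrogate contracts are honored. Since the surrogate contracts are i.i.d.\ Bernoulli$(\bar{p}_{hhp})$, the number honored is distributed Binomial$(N,\bar{p}_{hhp})$, and the probability of exactly $k$ successes is $\binom{N}{k}\bar{p}_{hhp}^{\,k}(1-\bar{p}_{hhp})^{N-k}$. Substituting $N=[X_{hhp}^{max}/\bar{\ell}]$ and $k=[y/\bar{\ell}]$ and expanding the binomial coefficient as $\frac{N!}{k!\,(N-k)!}$ reproduces equation~(\ref{eq01012025}). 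The interpretation in terms of $y'$ then follows directly: the undelivered demand is $y'=D^{hhp}-y$ when $y\le D^{hhp}$ (a deficit) and the surplus is $y'=y-D^{hhp}$ when $D^{hhp}\le y\le X_{hhp}^{max}$ (an excess), so the same binomial mass governs both the shortfall and the over-supply regimes described just before the statement.

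The main obstacle is not probabilistic but one of bookkeeping: justifying the discretization implicit in the floor and bracket operations. I would need to check that $X_{hhp}^{max}/\bar{\ell}$ and $y/\bar{\ell}$ are (or are rounded to) integers, so that $N$ and $k$ are well-defined trial and success counts, and that the rounding introduced by the averages $\bar{\ell}$ and $\bar{p}$ does not break the combinatorial identity. This is exactly where the homogeneity assumption does the heavy lifting, since without it the delivered quantity would be a sum of non-identically-distributed, differently-weighted Bernoulli variables whose distribution is precisely the NP-hard object of lemma~\ref{lmProbNP}. I would therefore state explicitly that equation~(\ref{eq01012025}) is an \emph{approximation} valid only under the stated assumption, and note that the independence of contract fulfillment across the $N$ surrogate contracts is a modeling hypothesis rather than something derived, so that the proof reduces to the elementary counting argument above once that hypothesis is granted.
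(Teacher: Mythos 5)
Your proof is correct and matches the paper's intended reasoning: the paper states this proposition without an explicit proof, treating equation~(\ref{eq01012025}) as the immediate consequence of modeling the honored contracts as $[X^{max}_{hhp}/\bar{\ell}]$ i.i.d.\ Bernoulli$(\bar{p})$ trials each delivering $\bar{\ell}$ kWh, which is exactly the binomial identification you make. Your added caveats — that the floor/rounding operations must make the trial and success counts integers, and that independence and homogeneity are modeling hypotheses rather than derived facts — are sound and, if anything, make explicit what the paper leaves implicit.
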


The following proposition shows that the probability $\hat{p}_{hhp}(y, J^{active})$ described in proposition~\ref{prop20012023b} is a lower bound for $p_{hhp}(y, J^{active})$.

\begin{proposition}\label{propBoundForProbabilty}
Given a set of  contracts $J^{active}$,  the approximation $\hat{p}(y, J^{active})$ (see proposition~\ref{prop20012023b}), is a lower bound to the true probability of supplying $y$ $kWh$ at $hhp $ to  the platform, i.e.
$$\hat{p}(y, J^{active})\leq {p}(y, J^{active})$$
\end{proposition}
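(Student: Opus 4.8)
The plan is to read the statement in its only mathematically tenable form — as a reliability comparison. Since $\hat p_{hhp}(\cdot,J^{active})$ and $p_{hhp}(\cdot,J^{active})$ cannot satisfy a pointwise inequality for \emph{every} $y$ as probability mass functions (both integrate to one), I would interpret the claim as: the approximate probability of the active contracts delivering at least the required $y$ $kWh$ never overstates the true such probability. The backbone is Lemma~\ref{lm141123}: for every active contract $j$ the assessed probability $\hat p(j)=(f_\ell(j)-b^n(j))/f_\ell(j)$ is a genuine lower bound on its true fulfillment probability $p(j)$, so the weighted average $\bar p$ of equation~(\ref{eq04012025}) is bounded above by the corresponding true weighted-average fulfillment probability.

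First I would write the true supply at $hhp$ as the random sum $S=\sum_{j\in J^{active}}\ell(j)\,X_j$, where the $X_j$ are independent with $X_j\sim\mathrm{Bern}(p(j))$ — exactly the quantity shown hard to evaluate in Lemma~\ref{lmProbNP}. I would then build the bound through a chain of two stochastic comparisons. In the first step I lower each success probability from $p(j)$ to $\hat p(j)$: by a monotone coupling (draw one uniform $U_j$ and declare the indicator successful iff $U_j$ lies below the relevant probability) the supply $\tilde S=\sum_j\ell(j)\tilde X_j$ with $\tilde X_j\sim\mathrm{Bern}(\hat p(j))$ is stochastically dominated by $S$, so $P(\tilde S\ge y)\le P(S\ge y)$ for every threshold $y$. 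This step is clean and uses nothing beyond Lemma~\ref{lm141123}.

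The second step is the hard part: passing from the heterogeneous sum $\tilde S$, with contract-specific sizes $\ell(j)$ and probabilities $\hat p(j)$, to the single homogeneous binomial model of proposition~\ref{prop20012023b}, in which $n=X^{max}_{hhp}/\bar\ell=|J^{active}|$ identical trials each deliver $\bar\ell$ with common probability $\bar p$. Here a simple coupling fails, because homogenizing probabilities at a fixed mean strictly increases variance and pushes mass into \emph{both} tails. My plan is to control only the tail that matters: the contracts are dimensioned so that the expected supply $\bar p\,X^{max}_{hhp}=\sum_j\ell(j)\hat p(j)$ is at least the demand level of interest, so covering $y$ is a lower-tail event for $\tilde S$, and Hoeffding's convex-order comparison for sums of independent Bernoullis makes the equal-probability (binomial) member the most pessimistic precisely in that lower tail; combined with the monotonicity of $P(\mathrm{Bin}(n,p)\ge k)$ in $p$, replacing each $\hat p(j)$ by $\bar p$ and each $\ell(j)$ by $\bar\ell$ can only decrease the assessed probability of covering $y$. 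Chaining the two steps yields $\hat p_{hhp}(y,J^{active})\le P(\tilde S\ge y)\le P(S\ge y)=p_{hhp}(y,J^{active})$.

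I expect the size homogenization $\ell(j)\mapsto\bar\ell$ to be the genuine obstacle, since Hoeffding's theorem is stated for equally weighted Bernoulli sums whereas $\tilde S$ is weighted by the $\ell(j)$. The cleanest remedies are either to assume, as the approximation of proposition~\ref{prop20012023b} implicitly does, that contract sizes are comparable so that collapsing them to $\bar\ell$ is harmless, or to replace the convex-order step by a direct majorization argument on the weighted sum, and in both cases to restrict the conclusion to the demand-covering regime $y\le\bar p\,X^{max}_{hhp}$ where the binomial tail is provably the conservative one.
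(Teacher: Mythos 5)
Your proposal takes a genuinely different route from the paper's proof, and in one important respect a sounder one. The paper argues at the level of point probabilities: it expands the true probability of supplying exactly $y$ $kWh$ as a number of combinations ($NumCom$) times expected products $\Pi_{j}p(j)$ and $\Pi_{j}(1-p(j))$ over subsets of $J^{active}$, assumes ``for simplicity'' that $NumCom$ equals the binomial coefficient of equation~(\ref{eq01012025}), bounds each power of $\bar{p}$ by the corresponding expected product by appeal to an ``Isoperimetric theorem,'' and finishes with $\hat{p}(j)\leq p(j)$ from Lemma~\ref{lmProbEstimation}. Your plan instead compares tail (reliability) probabilities: a monotone coupling to pass from true to assessed per-contract probabilities (solid, and playing the same role as the paper's final step), then a Hoeffding-type extremality argument to pass from the heterogeneous assessed sum to the homogeneous binomial model of Proposition~\ref{prop20012023b}, valid in the lower-tail regime $y\leq\bar{p}\,X^{max}_{hhp}$. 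Your opening observation is a genuine criticism the paper never confronts: two probability mass functions on the same support cannot be pointwise ordered at every $y$, so the claim is only tenable in your cumulative reading together with a regime restriction. Moreover, the paper's product-versus-power step is asserted in the direction that Maclaurin-type symmetric-function inequalities reverse (the average over subsets of $\Pi_j p(j)$ is at most, not at least, the corresponding power of the arithmetic mean); the paper's chain survives only to the extent that the slack from $\hat{p}(j)\leq p(j)$ compensates, which it never quantifies. Your two-step decomposition (first lower the probabilities by coupling, then homogenize) is precisely how to make that trade-off rigorous.

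The one genuine gap in your plan --- homogenizing the contract sizes $\ell(j)$ to $\bar{\ell}$, where Hoeffding's comparison requires equally weighted Bernoulli trials --- is real, but the paper does not resolve it either: Proposition~\ref{prop20012023b} simply assumes every contract exports $\bar{\ell}$ with probability $\bar{p}$, and the $NumCom$ simplification in the paper's proof is the same homogenization in disguise. If you state size comparability (or adopt the paper's own averaging assumption) and the lower-tail restriction as explicit hypotheses, your chain $\hat{p}_{hhp}(y,J^{active})\leq P(\tilde{S}\geq y)\leq P(S\geq y)=p_{hhp}(y,J^{active})$ is complete, and, step for step, it rests on correctly oriented standard inequalities, which the paper's argument does not.
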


\begin{proof}
See appendix~\ref{apProofs}.
\end{proof}
 
\begin{corollary}
A risk neutral fleet owner $n \in \mathbb{N}$ will abstain  from obtaining  a contract through the Hour-Scheduling mechanism, in order to obtain  payment through the FR mechanism, if
$$\mathbb{E}[u^n(\hat{J}(\overline{WB},\hat{D},M))
\leq $$
$$\leq \mathbb{E}\sum_{m\in M}[payment_{FR}^m((x^{max}_m(hhp), x^{min}_m(hhp),hhp , SoC^m(hhp), J^{active}, plugged({hhp})))-$$
$$-(x_{exported}^m(hhp)\cdot (c^{BD}+m^{imported})]$$
\normalsize
\end{corollary}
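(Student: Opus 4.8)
The plan is to treat this as a direct revealed-preference argument resting on expected-utility maximization: by assumption a \emph{risk-neutral} fleet owner selects whichever action maximizes her expected utility, so the abstention condition reduces to comparing the expected utilities of the two available courses of action. The two competing actions are (i) participating in the Hour-Scheduling mechanism, whose expected utility is already supplied in Section~\ref{secHourSchedulingMechanism} as $\mathbb{E}[u^n(\overline{WB},\hat{D},M)] = \mathbb{E}[payment^n(\overline{WB},\hat{D},M)] - \sum_{j\in \hat{J}_n(\overline{WB},\hat{D},M)}\mathbb{E}[c(j)]$, and (ii) abstaining from Hour-Scheduling and instead connecting each of her EVs $m$ to the FR mechanism.

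First I would make explicit the fleet's expected utility under action (ii). For each EV $m$ the fleet collects the FR payment $payment^m_{FR}(\cdot)$ and incurs the only genuine cost of discharging energy, namely battery deterioration at $c^{BD}$ per $kWh$ together with the original import price $m^{imported}$ per $kWh$ of the energy it now exports (Definition~\ref{defMimport}); hence the per-EV cost is $x^m_{exported}(hhp)\cdot(c^{BD}+m^{imported})$. Summing over the fleet's EVs and taking expectations gives precisely the right-hand side of the stated inequality, $\mathbb{E}\sum_{m\in M}[payment^m_{FR}(\cdot) - x^m_{exported}(hhp)\cdot(c^{BD}+m^{imported})]$. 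The conclusion is then immediate: a risk-neutral owner weakly prefers action (ii) to action (i) exactly when the expected utility of (ii) is at least that of (i), which is the displayed inequality; so whenever it holds the fleet abstains from the Hour-Scheduling mechanism in favour of the FR payment.

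The main obstacle is not in the logic but in correctly isolating the fleet's true cost of export and separating it from the compensation already embedded inside $payment^m_{FR}$. The final term of the FR payment credits exported energy at the day-ahead price $m^{hhp}_{day-ahead}$, and this revenue term must not be conflated with the cost, whereas the actual economic cost of the exported $kWh$ is the forgone import price $m^{imported}$ together with $c^{BD}$. A secondary subtlety worth flagging is that the active-contract set $J^{active}$ and the realized quantities $x^m_{exported}$, $x^m_{imported}$ feeding $payment^m_{FR}$ themselves depend on whether the fleet participated in Hour-Scheduling; the clean comparison above implicitly evaluates each action under its own induced $J^{active}$ (or holds it fixed), and making that dependence precise—together with invoking risk neutrality to justify comparing expectations rather than full distributions—is where the care is needed.
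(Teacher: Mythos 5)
Your proposal is correct and coincides with the paper's own (implicit) reasoning: the paper states this corollary without any proof, treating it as an immediate consequence of a risk-neutral owner comparing the expected utility of Hour-Scheduling participation against the expected FR payment net of the true export cost $x^m_{exported}(hhp)\cdot(c^{BD}+m^{imported})$, which is exactly your revealed-preference argument. The subtleties you flag (not conflating the $m^{hhp}_{day-ahead}$ credit inside $payment^m_{FR}$ with the cost, and the dependence of $J^{active}$ on the participation decision) are glossed over by the paper as well, so making them explicit only strengthens the argument.
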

where $m^{imported}$ is in accordance with  definition~\ref{defMimport}, 
which is private information.

Note that a fleet $n$ can add to their bids $ b(j)|_{j\in J_n}$ in the Hour-Scheduling mechanism  the expected loss from not participating in  the FR mechanism.

\section{Conclusion}\label{secConclusion}

V2X technology is a leading direction   for meeting
net-zero objectives which may be applied through electricity trading  in global and local energy markets. Applying V2X technology 
by using  energy trading mechanisms to induce electric vehicles to apply their electric vehicle batteries for  energy storage during off-peak hours for export  during peak hours, enables cost and carbon emission reduction.

We  consider a model enabling the joint effort of EV fleets  participating in energy transaction contracts to help cover   energy demand of the energy system via their batteries' joint energy storage.

We suggest two mechanisms for facilitating energy trading  between electric vehicles and a grid (via V2G) or other entities with electricity demand, such as homes (via V2H), loads (via V2L), and buildings (via V2B):

\begin{itemize}
\item The \textit{Hour-Scheduling mechanism}  for exporting electricity to the platform alongside Spot markets. 
\end{itemize}

This mechanism is   
  a novel application of a VCG mechanism~\cite{Clarke71} in the context of electricity trading,  
allowing fleets to engage in contracts  stating  quantities of $kWh$ they wish to export at  specific $hhps\in \mathbb{T}$ to the platform,  thus enabling the system to divert importing $kWh$ from the grid at peak hours to off-peak hours.

Our approach  simultaneously induces truthfulness while leading to energy trading efficiency (see theory~\ref{thEfficiency}). It considers all $hhp$s of the day and their interrelated-dependencies and not only  single $hhp$s. Furthermore, this mechanism
can be computed in polynomial time (see theorem~\ref{th12032024}). In addition, 
the Hour-Scheduling mechanism may  engage with  fleets  consisting of multiple EVs, each EV with it's own energy consumption schedules and costs, rather than only  independently owned   single EVs.

Furthermore, we introduce the following
mechanism for exporting energy to support the Ancillary Services  market:  
\begin{itemize}
\item The \textit{Frequency Regulating mechanism}  enables frequency regulation via EV's $kWh$ charging and discharging potential. 
\end{itemize}

In order to support the continuous flow of electricity, so that the demand for electrical energy is met in real time,
we suggest the  Frequency Regulating mechanism to alleviate and ease the Balancing market.
This mechanism compensates EV owners for remaining connected  to the platform,  allowing their EV bi-directional batteries to operate as  flexible loads which can import or export energy to and from the platform according to demand when the grid's voltage is not in the required range~\cite{luo2018review}.
By using  EV's batteries for ancillary services,  society's electricity costs as well as society's carbon emissions are reduced.

Future research directions include the analysis of the optimal safety margin $sm$ (see equation~\ref{eqDemand}) in addition to the optimal values $const^{EX}$ and $const^{IM}$ (see equation~\ref{eqPaymentBalance}) in order to maximize expected revenue, social welfare or/and to minimize carbon emissions.  
In addition,  expanding the Hour-Scheduling mechanism to the Forward market as well as to the Ancillary Services market could also be intriguing.

\section{Acknowledgments}\label{secAcknowledgement}

Part of this work was conducted while the authors were employed by Keele University working on the Innovate UK funded project - Zero Carbon Tariffs with Cashback for V2X enabled Non-domestic Customers. We thank colleagues at Qbots Energy for useful discussions.

\bibliographystyle{ACM-Reference-Format}
\bibliography{bibliography}

\begin{appendix}

\section{Society's Savings due to  Contracts $J\subseteq J^{offered}$}\label{appedixCostRed}

The following algorithm~\ref{algValue}   determines  society's   savings (cost reductions) due to applying the Hour-Scheduling mechanism, and returns $SocSave:J\times D\times M\rightarrow \mathbb{R}$ where $M\in \mathbb{R}^{\mathbb{T}}$ describes the vector of market prices,    $D\in \mathbb{R}^{\mathbb{T}}_{\geq 0}$ describes the vector of energy demand, which has  not yet been covered by contracts, and $J\subseteq J^{offered}$ is a set of contracts belonging to $J^{offered}$. In the case where participants are truthful and announce bids equal to their cost for each contract $j\in J:$ $b(j)=c(j)$, then $SoCSave$ is equivalent to social welfare.

For abbreviation we refer at times to $SocSave(\hat{J}(\overline{WB},D,M),D,M)$  as $SocSave(\overline{WB},D,M)$ when clear from the context.

\begin{algorithm}[H]
\caption{Returns  society's cost reduction (saving) ($SocSave(\cdot)$) due to accepting  contracts ${J}\subseteq J^{offered}$, energy demand $D$ and market prices $M$.} 
\label{algValue}
\KwIn{A set of  contracts  ${J}\subseteq J^{offered}$,   
 market prices $\forall hhp\in \mathbb{T}:m^{hhp}$ and  demand $\forall hhp\in \mathbb{T}:D^{hhp}$.  } 
\KwOut{ Society's  savings due to the platform accepting and implementing contracts $J\subseteq J^{offered}$. }
$SocSave=\sum_{j\in {J}}[m^{hhp(j)}\cdot \ell(j)-b(j)] - \sum_{hhp'\in \mathbb{T}}[ m^{hhp'} \cdot Max\{0,\sum_{j'\in J:hhp(j')=hhp'}\ell(j')-D^{hhp'}\}]
$
\Return{ $SocSave$ }
  \end{algorithm}
\section{Proofs}\label{apProofs}

\textbf{Proposition~\ref{propTruthfulIRBB}(\ref{proptruth})}.  The mechanism described in theorem~\ref{th12032024} is truthful.

\begin{proof}
Consider the set of accepted contracts by the Hour-Scheduling mechanism  in theorem~\ref{th12032024}  $\hat{J}(\overline{WB},\hat{D})$. The utility of fleet $n\in N$ due to  the Hour-Scheduling mechanism is: $$u^n(\overline{WB},\hat{D},M)=payment^n(\overline{WB},\hat{D},M)-\sum_{j\in \hat{J}_n(\overline{WB},\hat{D},M)}c(j)$$ 
where $c(j)$ 
is the cost of contract $j$ according to  equation~(\ref{eqCost}). Therefore, due to equation~(\ref{eqPaymentRule}),
\begin{equation}\label{eq170125}u^n(\overline{WB},D,M)=[SocSave_{-n}( \hat{J}(\overline{WB},\hat{D},M))- SocSave(\hat{J}(\overline{WB}_{-n},\hat{D})) ]-\sum_{j\in \hat{J}_n(\overline{WB},\hat{D})}c(j)\end{equation}

Because $SocSave(\hat{J}(\overline{WB}_{-n},\hat{D},M)) $ is independent of fleet $n$'s announcements,  
fleet $n$'s dominant strategy is to announce $\hat{b}_n\in \mathbb{R}^{|J_n|}$ that will  maximize the term:

\begin{equation}\label{eq170125a}
argmax_{\hat{b}_n\in \mathbb{R}^{|J_n|}}\{ SocSave_{-n}( \hat{J}(\overline{\hat{WB}}_n, \overline{WB}_{-n},\hat{D},M))-\sum_{j\in \hat{J}_n(\overline{\hat{WB}}_n, \overline{WB}_{-n},\hat{D},M)}c(j)\}\end{equation}
 where $\overline{WB}_{-n}$ is the vector of sets of contracts induced by the bids offered by fleets $n'\in N\backslash n$,
and where $\overline{\hat{WB}}_n$ is the vector of sets of contracts induced by fleet $n$'s  announced bids $\hat{b}_n\in \mathbb{R}^{|J_n|}$.

For simplicity, we assume that each fleet $n'$s bundles $WB_{n'}$ and contracts $J_{n'}$ are constant although  their bids $\forall j\in J^{n'}:b_{n'}$ may vary.

Note that  algorithm~\ref{algValue} selects the set of contracts $\tilde{J}\subseteq (\overline{\hat{WB}}_n,\overline{WB}_{-n})$  such that

\begin{equation}\label{eq17012025b}\tilde{J}\in argmax_{J\subseteq (\overline{\hat{WB}}_n,\overline{WB}_{-n})}\{SocSave( \hat{J}(\overline{\hat{WB}}_n,\overline{WB}_{-n},\hat{D},M))\}\end{equation}

In addition, note that due to theorem~\ref{thEfficiency}
\small
$$\forall J'\in J^{F}:$$
$$Max_{J\subseteq \overline{WB}}[SocSave_{-n}(\hat{J}(\overline{WB},\hat{D},M),\hat{D},M) - \sum_{j\in J^n}c(j)
  =SocSave(\hat{J}(\overline{WB},\hat{D},M),\hat{D},M)\geq SocSave(J',\hat{D},M)$$
  \normalsize
Therefore, due to equations~(\ref{eq170125}), ~(\ref{eq170125a}) and~(\ref{eq17012025b})
\small
$$Soc(\hat{J}(\overline{WB},\hat{D},M),\hat{D},M)\geq max_{\hat{b}\in \mathbb{R}^{|J_n|}}[\hat{J}((\overline{\hat{WB}}_n,\overline{WB}_{-n}),\hat{D},M)-\sum_{j\in \hat{J}((\overline{\hat{WB}}_n,\overline{WB}_{-n}),\hat{D},M)}c(j)]$$
\normalsize
Thus, we deduce that $u^n(\overline{WB},\hat{D},M)\geq u^n((\overline{WB}_{-n},\overline{\hat{WB}}_n),\hat{D},M)$.

We maintain that the dominant strategy for fleet $n$ is to announce their true value $\forall j\in J^n:\hat{b}(j)=c(j)$, in order to maximize their utility.
\end{proof}

\textbf{Proposition~\ref{propTruthfulIRBB}(\ref{propIR}) }
The Hour-Scheduling mechanism satisfies IR, i.e.
\begin{equation}\label{eqCostLesThanPayment}
\forall n
\in N:\sum_{j\in {J}_n(\overline{WB},\hat{D},M)}c(j)\leq payment^n (\overline{WB},\hat{D},M)
\end{equation}

\begin{proof}

Assume negatively, that $\exists n\in N: \sum_{j\in {J}_n(\overline{WB},\hat{D},M)}c(j)>payment^n (\overline{WB},\hat{D},M)$.
That is, due to equation~(\ref{eqPaymentRule})
$$\sum_{j\in J^{}_n(\overline{WB},\hat{D},M)}c(j)>payment^n(\overline{WB},\hat{D},M)=  SocSave_{-n}( \hat{J}(\overline{WB},D,M))- SocSave(\hat{J}(\overline{WB}_{-n},D,M))  $$

Therefore, 
$$SocSave({J}(\overline{WB}_{-n},\hat{D},M)) > SocSave_{-n}({J}(\overline{WB},\hat{D},M))- \sum_{j\in \hat{J}_n(\overline{WB},\hat{D},M)}c(j) $$

And due to proposition~\ref{propTruthfulIRBB}(\ref{proptruth})
$$SocSave_{-n}({J}(\overline{WB},\hat{D},M))- \sum_{j\in \hat{J}_n(\overline{WB},\hat{D},M)}c(j)\cdot \ell(j) = SocSave(\hat{J}(\overline{WB},\hat{D},M)) $$

That is,
$$  SocSave(\hat{J}(\overline{WB}_{-n},\hat{D},M))>SocSave(\hat{J}(\overline{WB},\hat{D},M))$$
which is a contradiction to proposition~\ref{th12032024}.
\end{proof}
\textbf{Proposition~\ref{propTruthfulIRBB}(\ref{propBB}) }  
    The Hour-Scheduling  mechanism in section~\ref{sectionHourSchedulingMec} satisfies  BB.\footnote{Note that definition~\ref{defBB} does not consider expectation of contracts' default as in the case of default the fleet's fine should cover the resulting loss.}, that is, $$SocSave(\overline{WB},\hat{D},M)\geq \sum_{n\in N}payment^n(\hat{J}(\overline{WB},\hat{D},M),\hat{D},M)$$
\begin{proof}
In this proof, for simplicity we shall assume that 
\begin{equation}\label{eq08012025simplicity}\forall hhp\in \mathbb{T}:\hat{D}^{hhp}\geq \sum_{j\in \hat{J}(\overline{WB},\hat{D},M):hhp(j)=hhp}\ell(j)\end{equation}

Let us prove negatively, and assume:
\begin{equation}\label{eq150724b}
SocSave(\overline{WB},\hat{D},M)<\sum_{n\in N}payment^n(\hat{J}(\overline{WB},\hat{D},M),\hat{D},M)
\end{equation}

Recall from equation~(\ref{eqPaymentRule}) that: 
\begin{equation}\label{eq1507c}
payment^n(\overline{WB},\hat{D},M)
= SocSave_{-n}( \hat{J}(\overline{WB},\hat{D},M))- SocSave(\hat{J}(\overline{WB}_{-n},\hat{D},M))\end{equation}

Therefore, due to equations~(\ref{eq150724b}) and~(\ref{eq1507c}):
\begin{equation}\label{eq091224b}
SocSave( {J}(\overline{WB},\hat{D},M))< \sum_{n\in N}SocSave_{-n}({J}(\overline{WB},\hat{D},M))- SocSave({J}(\overline{WB}_{-n},\hat{D},M))
\end{equation}

Due to algorithm~\ref{algValue}
$$Soc(\overline{WB},\hat{D},M)\leq \sum_{j\in \hat{J}(\overline{WB},\hat{D},M)}(m^{hhp(j)}\cdot \ell(j)-b(j))$$ 

Note that $\forall n\in N$, the additional value fleet $n$ brings  the system is $Soc_{-n}(\hat{J}(\overline{WB},\hat{D},M))-Soc(\hat{J}(\overline{WB}_{-n},\hat{D},M))$ which cannot surpass
the maximal value the system could attain from  fleet $n$'s covering electricity demand through  contracts $J^n$ which is
$\sum_{j\in J_n}(m^{hhp(j)}\cdot 
\ell(j)-b(j))$. Therefore,  

\begin{equation}\label{eq300125}Soc_{-n}(\hat{J}(\overline{WB},\hat{D},M)-Soc(\hat{J}(\overline{WB}_{-n},\hat{D},M)\leq \sum_{j\in J_n}(m^{hhp(j)}\cdot \ell(j)-b(j))\end{equation}
Therefore, due to equation~(\ref{eqPaymentRule}), algorithm~\ref{algValue} and assumption~\ref{eq08012025simplicity}, we have
$$\sum_{n\in N}payment^n(\overline{WB},\hat{D},M)\leq  \sum_{n\in N}\sum_{j\in J_n}(m^{hhp(j)}\cdot \ell(j)-b(j))= Soc(\hat{J}(\overline{WB},\hat{D},M)) . $$

which is a contradiction to equation~(\ref{eq150724b}). 
 \end{proof}

\textbf{Lemma~\ref{lmProbNP}}
Computing the probability of supplying $y$ kWh during $hhp$ given $J^{active}$, i.e, $p_{hhp}(y,J^{active})$,   is an NP-hard problem.

\begin{proof}  

The proof is by reduction from the Knapsack problem~\cite{karp1972complexity}, where we consider a Knapsack instance $\{(v_i,w_i)\}_{i\in N}$ (volume, worth). 

We consider the items in the knapsack problem $\tilde{N}$ equivalent to the set of contracts $J^{active}$ thus: $\forall i\in \tilde{N}:(v_i,w_i)$ where $v_i$ ($w_i$) is the volume (worth) of item $i$: $\exists j\in J^{active}$, where $ln(p(j))=v_i$ (according to lemma~\ref{lmProbEstimation}) and $\ell(j)=w_i$. 
The term $\bar{p}_{hhp}(y, j^{active})$ describes the probability that that no less than $y$ $kWh$ are supplied during $hhp$, and $ln(\bar{p}_{hhp}(y, j^{active}))$  shall be equivalent to the probability that the sack will be filled no less than $Z$. Therefore,  $ln(\bar{p}_{hhp}(y,j^{active}))'=\frac{1}{\bar{p}_{hhp}(y,j^{active})}$ 
is equivalent to the probability of filling a sack of size $Z$.
\end{proof}

\textbf{Proposition~\ref{propBoundForProbabilty} }
Given a set of  contracts $J^{active}$,  the approximation $\hat{p}_{hhp}(y, J^{active})$ (see proposition~\ref{prop20012023b}), is a lower bound of the true probability of  supplying  $y$ $kWh$ at $hhp $ to  the platform, i.e.
$$\hat{p}_{hhp}(y, J^{active})\leq {p}_{hhp}(y, J^{active})$$
where $p_{hhp}(y,J^{active})$ is the true probability of $y$ $kWh$ discharged to the platform via contracts $J^{active}$ at $hhp\in \mathbb{T}$.

\begin{proof}
We wish to show that
\begin{equation}\label{eq90125}
\hat{p}_{hhp}(y, J^{active})=\cfrac{|\frac{{X}^{max}_{hhp}}{\bar{\ell}}|!}{[\frac{y}{\ell}]!\cdot \left[[\frac{{X}^{max}_{hhp}}{\bar{\ell}}]-[\frac{y}{\bar{\ell}}]\right]!}\cdot [(\hat{p}_{hhp})^{\frac{y}{\bar{\ell}}}]\cdot [(1-\bar{p}_{hhp})^{([\frac{{X}_{hhp}^{max}}{\bar{\ell}}]-[\frac{y}{\bar{\ell}}])}]\leq \end{equation}
$$ \leq NumCom \cdot 
\mathbb{E}_{\tilde{J}\subseteq \hat{J}(\overline{WB},\hat{D},M):\sum_{j\in \tilde{J}}\ell(j)=(X^{max}_{hhp}-y)}\Pi_{j\in \tilde{J}}{(1-p(j))} \cdot 
\mathbb{E}_{\tilde{J}\subseteq \hat{J}(\overline{WB},\hat{D},M):\sum_{j\in \tilde{J}}\ell(j)=y}\Pi_{j\in \tilde{J}}({p(j)})$$
 where
$ \ y'={D}^{hhp}-y \ $
  if there is a deficit of $y'$ $kWh$ (and  $ \   y'=y-\hat{D}^{hhp} $ if there is an excess $y'$ $kWh$ supplied), and
 where $\bar{p}$ is in accordance with equation~(\ref{eq04012025}), and
where $NumCom$ is equal to the  number  of  different  combinations  of  honored  and defaulted contracts  that could  cover  demand  y.

For simplicity, we assume that  $NumCom = \cfrac{|\frac{X^{max}_{hhp}}{\bar{\ell}}|!}{[\frac{y}{\ell}]!\cdot \left[[\frac{X^{max}_{hhp}}{\bar{\ell}}]-[\frac{y}{\bar{\ell}}]\right]!}$.

 Due to the Isoperimetric theorem~\cite{pressley2010elementary},
$\forall X^{max}_{hhp}\in \mathbb{R}_{\geq 0}$ and $\forall y\in [0,X^{max}_{hhp}]$, he have:

$$
(\bar{p})^{\frac{y}{\bar{\ell}}}\leq 
\mathbb{E}_{\tilde{J}\subseteq \hat{J}(\overline{WB},\hat{D},M):\sum_{j\in \tilde{J}}\ell(j)=y}\Pi_{j\in \tilde{J}}({p(j)})$$
\begin{center}
and
\end{center} 
$$ (1-\bar{p})^{([\frac{X_{hhp}^{max}}{\bar{\ell}}]-[\frac{y}{\bar{\ell}}])}\leq
 \mathbb{E}_{\tilde{J}\subseteq \hat{J}(\overline{WB},\hat{D},M):\sum_{j\in \tilde{J}}\ell(j)=(X^{max}_{hhp}-y)}\Pi_{j\in \tilde{J}}{(1-p(j))} $$

Therefore 
$$ 
[(\bar{p})^{\frac{y}{\bar{\ell}}}]\cdot [(1-\bar{p})^{([\frac{X_{hhp}^{max}}{\bar{\ell}}]-[\frac{y}{\bar{\ell}}])}]\leq
$$
$$\leq [\mathbb{E}_{\tilde{J}\subseteq \hat{J}(\overline{WB},\hat{D},M):\sum_{j\in \tilde{J}}\ell(j)=y}\Pi_{j\in \tilde{J}}({p(j)})]\cdot [\mathbb{E}_{\tilde{J}\subseteq \hat{J}(\overline{WB},\hat{D},M):\sum_{j\in \tilde{J}}\ell(j)=(X^{max}_{hhp}-y)} \Pi_{j\in \tilde{J}}{(1-p)}].
$$

In addition to the above analysis recall that  $\forall j\in J^{active}: \hat{p}(j)\leq p(j)$ (see lemma~\ref{lmProbEstimation}) which further ascertains that $\hat{p}(y,J^{active})$ is a lower bound to $p(y, J^{active})$.
\end{proof}

\end{appendix}

\end{document}